\newtheorem{prop}{Proposition}
\title{\LARGE \bf
Hopf Bifurcations in Replicator Dynamics with Distributed Delays
}
\author{Nesrine Ben-Khalifa$^{\diamond}$, Rachid El-Azouzi$^{\diamond}$ and Yezekael Hayel$^{\diamond}$ 
\thanks{$^\diamond$ CERI/LIA, University of Avignon, France.}
}
\def\be{\begin{equation}}
\def\ee{\end{equation}}
\def\bearn{\begin{eqnarray*}}
\def\eearn{\end{eqnarray*}}
\def\bear{\begin{eqnarray}}
\def\eear{\end{eqnarray}}
\def\barr{\begin{array}}
\def\earr{\end{array}}
\begin{document}

\maketitle
\thispagestyle{empty}
\pagestyle{empty}

\begin{abstract}
In this paper, we study the existence and the property of the Hopf bifurcation in the two-strategy replicator dynamics with distributed delays. In evolutionary games, we assume that a strategy would take an uncertain time delay to have a consequence on the fitness (or utility) of the players. As the mean delay increases, a change in the stability of the equilibrium (Hopf bifurcation) may occur at which a periodic oscillation appears. We consider Dirac, uniform, Gamma, and discrete delay distributions, and we use the Poincar\'e- Lindstedt's perturbation method to analyze the Hopf bifurcation. Our theoretical results are corroborated with numerical simulations.
\end{abstract}

\section{INTRODUCTION}

Evolutionary game theory provides an analytical framework for modeling and studying continuous interactions in a large population of agents. 
It was introduced in \cite{smith_orig} where the authors proposed the notion of an evolutionarily stable strategy (ESS), which is an equilibrium characterized by the property of resistance to a sufficiently small fraction of mutants. However, this static notion does not give an estimate of the time required for the ESS to overcome the mutants, neither the asymptotic state of the population (asymptotic fraction of each strategy in the population) given an initial configuration. To overcome these shortcomings, the authors proposed in \cite{taylorjonker78} the replicator dynamics which is a model that enables the prediction of the time evolution of each strategy's fraction in the population. 
Its main concept is the replicator dynamics \cite{taylorjonker78} which is a model that enables the prediction of the time evolution of each strategy's fraction in the population. 
In this dynamics, the growth rate of a strategy is proportional to how well this strategy performs relative to the average payoff in the population.
In social sciences, the replicator dynamics can be seen as an imitation process, where each player has occasionally the opportunity to revise his strategy and imitate another player whose utility is better than his. 

As originally defined, the replicator dynamics does not take into account of delays, and it assumes that an interaction has an instantaneous effect on the fitness of players. However, this assumption can fail to be true. In many situations, we observe some time interval between the use of a strategy and the time a player feels its impact. For example, in economics, the investments take some time delay, which is usually uncertain, to generate revenues. In evolutionary game literature, there have been works that included delays in the replicator dynamics such as \cite{yi97,hopf_2012,hopf2016,cdc2014,albos,delay2012}. In \cite{yi97}, the author introduced a single fixed delay in the fitness function and derived the critical delay at which the stability of the equilibrium is lost. In \cite{nesrine_random}, a linear analysis of the replicator dynamics with distributed delays is proposed. In \cite{hopf2016}, the authors examined the Hopf bifurcations in the replicator dynamics considering a scenario where the delay is fixed but not all the interactions are equally subject to delays. They distinguished homogeneous interactions (involving similar strategies) and mixed interactions (involving different strategies). 

A Hopf bifurcation can be supercritical, in which case  the limit cycle is stable, or subcritical, in which case the limit cycle is unstable. In this work, we aim to determine the properties of  the limit cycle created in the neighborhood of the Hopf bifurcation using Poinacar\'e-Lindstedt's perturbation method. To the best of our knowledge, this paper is the first attempt to study the bifurcations in the replicator dynamics with distributed delays.

The present paper is structured as follows. First, in Section \ref{sec0}, we recall the main concepts in evolutionary games. In Section \ref{sec_repl}, we derive the replicator dynamics with distributed delays. In Section \ref{hopf}, we analyze the Hopf bifurcations in the replicator dynamics considering Dirac, uniform, Gamma, and discrete delay distributions. In Section \ref{sec_num}, we compare the theoretical results with numerical simulations. Finally, in Section \ref{sec_conc}, we conclude the paper.

\section{Evolutionary Games}\label{sec0}
We consider a population in which the agents are continuously involved in random pairwise interactions. At each interactions, the engaged players obtain payoffs that depend on the strategies used. The matrix that gives the outcome of an interaction for both players is given by:
\begin{equation*}
{\cal{G}}=\left (
\begin{array}{cc}
a&b\\
c&d
\end{array}
\right).
\end{equation*}
Let $s(t)$ ($1-s(t)$) be the proportion of the population using the strategy $A$ ($B$). The utilities of strategies $A$ and $B$ at an instant $t$ are given by:
\begin{eqnarray*}
U_A(t)&=&as(t)+b(1-s(t)),\\
U_B(t)&=&cs(t)+d(1-s(t)).
\end{eqnarray*}
Let $\bar{U}$ be the average payoff in the population. $\bar{U}(t)$ is given by:
\begin{eqnarray*}
\bar{U}(t)=s(t)U_A(t)+(1-s(t))U_B(t).
\end{eqnarray*}
In two-strategy games, the replicator dynamics is given by:
\begin{eqnarray*}
\frac{ds(t)}{dt}&=&s(t)(U_A(t)-\bar{U}(t)),\\
&=&s(t)(1-s(t))(U_A(t)-U_B(t)).
\end{eqnarray*}
Let $\delta_1=b-d$, $\delta_2=c-a$, and $\delta=\delta_1+\delta_2$. If $\delta_1>0$ and $\delta_2>0$, then there exists a mixed equilibrium given by $s^*=\frac{\delta_1}{\delta_1+\delta_2}$ which is asymptotically stable in the replicator dynamics. We consider hereafter that this assumption holds.
\section{Replicator dynamics with Distributed Delays}\label{sec_repl}
In this section, we introduce in the replicator dynamics continuous distributed delays. When a player uses a strategy at time $t$, he would receive his payoff after some random delay $\tau$, it means at time $t+\tau$. Then its expected utility is determined only at that instant, i.e. $U(t+\tau)$. 
If the delay is equal to $\tau$, then the expected payoff of strategy $A$ at time $t$ is determined by:
$$
U_A(t,\tau)=as(t-\tau)+b(1-s(t-\tau)),
$$
if $t\geq \tau$, it is 0 otherwise.
Let $p(\tau)$ be the probability distribution of delays whose support is $[0,\infty[$. As we consider a large population, every player can experience a different positive delay. Thus, we consider the expected payoff of all the players choosing strategy $A$ by averaging the payoffs of all individuals and then all possible delays as:
$$
U_A(t)=\int_{0}^{\infty}{p(\tau)}U_A(t,\tau)d\tau.
$$  
The expected payoffs to strategies $A$ and $B$ then write:
\begin{eqnarray*}
U_A(t)=a \int_{0}^{\infty}{p(\tau)s(t-\tau)d\tau}+b\big(1-\int_{0}^{\infty}{p(\tau)s(t-\tau)d\tau}\big),\\
U_B(t)=c \int_{0}^{\infty}{p(\tau)s(t-\tau)d\tau}+d\big(1-\int_{0}^{\infty}{p(\tau)s(t-\tau)d\tau}\big).
\end{eqnarray*}
Therefore, the replicator dynamics can be written as:
\begin{eqnarray}\label{full_equation0}
\frac{ds(t)}{dt}=s(t)(1-s(t))(-\delta \int_{0}^{{\infty}}{p(\tau)s(t-\tau) d\tau}+\delta_1).
\end{eqnarray}
In order to investigate the local asymptotic stability of the mixed equilibrium $s^*$, we can make a linearization around the equilibrium and derive the associated characteristic equation. The equilibrium point of the linearized equation is locally asymptotically stable if and only if all the roots of the characteristic equation have negative real parts; if there exists a root with a zero or positive real part, then it is not asymptotically stable \cite{gopalsamy}.

Let $x(t)=s(t)-s^*$. Substituting $s$ with $x$ in the previous equation, we get:
\begin{eqnarray}\label{full_equation}
\frac{dx(t)}{dt}=-\delta \gamma  \int_{0}^{\infty}{p(\tau)x(t-\tau) d\tau}-\delta (1-2s^*) x(t) \times  \nonumber \\ \int_{0}^{\infty}{p(\tau)x(t-\tau) d\tau}+  \delta x^2(t)  \int_{0}^{\infty}{p(\tau)x(t-\tau) d\tau};
\end{eqnarray}
which is of the form,
\begin{eqnarray}\label{full_eq2}
\frac{dx(t)}{dt}=A\int_{0}^{\infty}{p(\tau)x(t-\tau) d\tau}+B x(t) \times \nonumber \\ \int_{0}^{\infty}{p(\tau)x(t-\tau) d\tau}+   C  x^2(t)  \int_{0}^{\infty}{p(\tau)x(t-\tau) d\tau},
\end{eqnarray}
where $A= -\delta \gamma$, $B=-\delta (1-2s^*)$, $C=\delta$, and $\gamma=s^*(1-s^*)$. 
Keeping only linear terms in the above equation, we get the following linearized equation:
\begin{eqnarray}
\frac{dx(t)}{dt}&=&A\int_{0}^{\infty}{p(\tau)x(t-\tau) d\tau}.
\end{eqnarray}
The characteristic equation can be derived by taking the Laplace transform of the linearized equation and is given by:
\begin{eqnarray}
\lambda-A\int_{0}^{\infty}{p(\tau)exp(\lambda \tau) d\tau}=0.
\end{eqnarray}
The characteristic equation enables us to determine the local asymptotic stability of the equilibrium. When a pair of conjugate complex roots passes through the imaginary axis, a Hopf bifurcation occurs at which the asymptotic stability of the equilibrium is lost and a limit cycle is created. The frequency of oscillations of the bifurcating limit cycle is equal to the complex parts of the pure imaginary root \cite{gopalsamy}. The amplitude of the limit cycle is very small at the Hopf bifurcation and grows gradually as the mean delay increases.

To determine the properties of the bifurcating limit cycle (criticality and amplitude), we should take into account of all the terms in the replicator dynamics, including nonlinear terms and a perturbation method should be used. We propose to use the Lindstedt's method, which has been proved to be efficient \cite{hopf2007,hopf2016}. The Lindstedt's method enables us to have an approximation of the bifurcating limit cycle.

At the Hopf bifurcation, the solution of the replicator dynamics (\ref{full_equation}) can be approximated as \cite{hopf2007}:
\begin{eqnarray*}
x(t)=A_m\mbox{cos}(w_0 t).
\end{eqnarray*}
To examine the bifurcating solution, we define a small parameter $\epsilon$ and a new variable $u$ as follows:
\begin{eqnarray*}
x(t)=\epsilon u(t).
\end{eqnarray*}
Furthermore, we stretch time by defining a new variable $\Omega$ as follows:
\begin{eqnarray*}
T=\Omega t.
\end{eqnarray*}
The equivalent replicator dynamics can then be written as:
\begin{eqnarray}\label{eq_tot}
&& \hskip -0.7 cm \Omega \frac{du(T)}{dT}=A \int_{0}^{\infty}{p(\tau)u(T-\Omega \tau) d\tau}+\epsilon B u(T)  \int_{0}^{\infty}{p(\tau)}\times \nonumber \\ &&\hskip -0.5cm {u(T-\Omega \tau) d\tau}+ \epsilon^2 C  u^2(T)\int_{0}^{\infty}{p(\tau)u(T-\Omega \tau) d\tau}. 
\end{eqnarray}
In addition, we make a series expansion of $\Omega$ as follows:
\begin{eqnarray}
\Omega=w_0+ \epsilon^2k_2+{{O}}(\epsilon^3),
\end{eqnarray}
we remove the ${O}(\epsilon)$ term because it turns out to be a zero, and
\begin{eqnarray}
u(T)=u_0(T)+\epsilon u_1(T)+\epsilon^2 u_2(T)+{{O}}(\epsilon^3).
\end{eqnarray}
Finally, in equation (\ref{eq_tot}), we expand out and compare the terms of the same order in  $\epsilon$. By setting the secular terms to zero, we get the amplitude of the limit cycle.
\section{Hopf Bifurcations in the Replicator Dynamics}\label{hopf}
We aim to determine the properties of the Hopf bifurcation in the replicator dynamics subjet to distributed delays. In the next subsections, we consider different delay distributions: Dirac, uniform, and Gamma distributions. We also consider a case of stochastic and discrete delays.
\subsection{Dirac Distribution}
We suppose there is a single fixed delay of a value $\tau$. The replicator dynamics in (\ref{full_equation0}) reduces to:
\begin{eqnarray*}
\frac{ds(t)}{dt}=s(t)(1-s(t))\big(-\delta s(t-\tau)+\delta_1\big).
\end{eqnarray*}
The characteristic equation associated to the linearized replicator dynamics around the interior equilibrium is given by:
\begin{eqnarray*}
\lambda+\delta \gamma exp(-\lambda \tau)=0.
\end{eqnarray*}
Let $\lambda^*=iw_0$ be a pure imaginary root, from the characteristic equation, that is:
\begin{eqnarray*}
iw_0+\delta \gamma exp(-iw_0\tau_{cr})=0,
\end{eqnarray*}
By separating the real and imaginary parts, we obtain:
\begin{eqnarray*}
\tau_{cr}=\frac{\pi}{2 \delta \gamma}, \mbox{ and } w_0=\delta \gamma.
\end{eqnarray*}
These formulae will be used later in solving the DDE.
By making a change of variable as mentioned in the previous section, we can write the replicator dynamics (\ref{eq_tot}) as follows:
\begin{eqnarray}\label{repl_dirac1}
\Omega \frac{du(T)}{dT}&=&Au(T-\Omega \tau)+\epsilon B u(T-\Omega \tau)+ \epsilon^2 C u^2(T)\times    \nonumber \\ && u(T-\Omega \tau).
\end{eqnarray}
From the replicator dynamics (\ref{repl_dirac1}), we can examine the behavior of the bifurcating periodic solution. The following proposition summarizes the properties of the bifurcating limit cycle.
\begin{prop}\label{prop_dirac}
Let $P=-20A^3 $ and $Q=5A^2C\tau_{cr}-3AB^2 \tau_{cr}-B^2$, the amplitude of the bifurcating limit cycle is given by
\begin{eqnarray*}
A_m=\sqrt{\frac{P}{Q}\mu},
\end{eqnarray*}
where $\mu=\tau-\tau_{cr}$. Furthermore, the Hopf bifurcation is supercritical.
\end{prop}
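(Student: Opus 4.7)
The plan is to apply the Lindstedt method directly to \eqref{repl_dirac1}, treating the deviation $\mu=\tau-\tau_{cr}$ as a small detuning parameter of the same order as $\epsilon^2$. Substituting the series ansatze $u(T)=u_0(T)+\epsilon u_1(T)+\epsilon^2 u_2(T)+O(\epsilon^3)$ and $\Omega=w_0+\epsilon^2 k_2+O(\epsilon^3)$, with $w_0=\delta\gamma$ and $w_0\tau_{cr}=\pi/2$ as already derived, the delayed argument must be Taylor-expanded around $T-w_0\tau_{cr}=T-\pi/2$. This introduces, at each order, correction terms proportional to $\mu$ and to $k_2$ via $u_j(T-\Omega\tau)=u_j(T-\pi/2)-(w_0\mu+\epsilon^2 k_2\tau_{cr})u_j'(T-\pi/2)+\cdots$. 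Writing \eqref{repl_dirac1} in these variables and grouping by powers of $\epsilon$ is the bookkeeping backbone of the argument.

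At order $\epsilon^0$ one recovers the linear delay equation whose $2\pi$-periodic solution at the bifurcation is $u_0(T)=A_m\cos T$, with $u_0(T-\pi/2)=A_m\sin T$. At order $\epsilon^1$ the forcing reduces to $B u_0(T)u_0(T-\pi/2)=\tfrac{1}{2}BA_m^2\sin 2T$, a pure second harmonic, so that $u_1$ can be written explicitly as an appropriate combination of $\cos 2T$ and $\sin 2T$. At order $\epsilon^2$ one must collect every source of a first harmonic: the cubic term $Cu_0^2u_0(T-\pi/2)$ (whose resonant piece comes from $\cos^2T\sin T=\tfrac14\sin T+\tfrac14\sin 3T$), the cross products $B\bigl(u_0 u_1(T-\pi/2)+u_1 u_0(T-\pi/2)\bigr)$, the correction $-k_2 u_0'(T)$ from the frequency expansion on the left-hand side, and the delay-expansion corrections proportional to $\mu$ and $k_2$ that hit $u_0$.

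The Fredholm alternative (solvability condition) requires that the coefficients of $\cos T$ and $\sin T$ in the forcing of the $u_2$ equation vanish identically; otherwise $u_2$ contains secular terms $T\cos T$ or $T\sin T$ incompatible with a bounded periodic solution. This produces two algebraic equations: one fixes the frequency correction $k_2$, while the other, after multiplying through by the appropriate common factor and collecting constants, takes the form $QA_m^2 - P\mu=0$ with $P=-20A^3$ and $Q=5A^2C\tau_{cr}-3AB^2\tau_{cr}-B^2$. Solving gives $A_m=\sqrt{P\mu/Q}$. Since $A=-\delta\gamma<0$ one has $P=-20A^3>0$, so the limit cycle exists exactly when $\mu>0$, i.e.\ on the unstable side $\tau>\tau_{cr}$ of the bifurcation; the limit cycle is therefore stable and the Hopf bifurcation is supercritical.

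The principal difficulty is the final collection step: at $O(\epsilon^2)$ the forcing mixes cubic products of $u_0$, bilinear products of $u_0$ with the already-computed $u_1$ at shifted arguments, and the Taylor corrections in $\mu$ and $k_2$ from the delayed argument. Isolating exactly the resonant components requires systematic use of the trigonometric reductions $\cos T\sin T=\tfrac12\sin 2T$, $\cos^2T\sin T=\tfrac14\sin T+\tfrac14\sin 3T$, together with the explicit form of $u_1$, to separate the harmonics cleanly and match the coefficient identities that yield $P$ and $Q$.
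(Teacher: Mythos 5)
Your proposal follows essentially the same route as the paper's own proof: Lindstedt's expansion with $\mu=\tau-\tau_{cr}$ treated as an $O(\epsilon^2)$ detuning, the same order-by-order bookkeeping ($u_0=A_m\cos T$, second-harmonic forcing at $O(\epsilon)$, removal of the resonant $\cos T$ and $\sin T$ terms at $O(\epsilon^2)$), and the same supercriticality argument from the limit cycle existing only on the unstable side $\tau>\tau_{cr}$. The only difference is one of detail rather than method: you state but do not carry out the explicit computation of $u_1$ (the paper gets $m_1=B A_m^2/(10A)$, $m_2=2m_1$ using $w_0=-A$) and of the final coefficient collection yielding $P$ and $Q$, and your supercriticality conclusion tacitly uses $Q>0$, which holds here but is not checked.
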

\begin{proof}
See the Appendix.
\end{proof}
Our result is in coherence with the results in \cite{hopf2007}. The result above means that the amplitude of the bifurcating limit cycle is proportional to $\sqrt{\tau-\tau_{cr}}$.
\subsection{Uniform Distribution}
When the delays are i.i.d. random variables drawn from the uniform distribution, that is when,
 \begin{eqnarray*}
p(\tau)=\frac{1}{\tau_{max}} {\mbox{ for }} \tau \in [0,\tau_{max}], {\mbox{ and }} {\mbox {zero otherwise} },
\end{eqnarray*}
the replicator dynamics can be written as:
\begin{eqnarray*}
\frac{ds(t)}{dt}=s(t)(1-s(t))\big(-\delta \int_{0}^{\tau_{max}}{\frac{1}{\tau_{max}}s(t-\tau) d\tau}+b-d\big).
\end{eqnarray*}
The corresponding linearized equation is given by:
\begin{eqnarray*}
\frac{dx(t)}{dt}= \frac{A}{\tau_{max}} \int_{0}^{\tau_{max}}{x(t-\tau) d\tau},
\end{eqnarray*}
and the associated characteristic equation is given by:
\begin{eqnarray*}
\lambda -\frac{A}{\tau_{max}}  \int_{0}^{\tau_{max}}{exp(-\lambda \tau) d\tau} =0.
\end{eqnarray*}
At the Hopf bifurcation, we have $\tau_{cr}=\frac{\pi^2} {2D}$ and $w_0=\frac{\pi}{\tau_{cr}}$,
where $D=\gamma \delta$.
In this case, equation (\ref{eq_tot}) is determined by:
\begin{eqnarray}\label{repl_unif}
&& \hskip -0.7cm \Omega \frac{du(T)}{dT}=\frac{A}{\tau_{max}} \int_{0}^{\tau_{max}}{  u(T-\Omega \tau) d\tau}+ \frac{\epsilon B}{\tau_{max}} u(T)\times\nonumber \\  && \hskip -1.01cm \int_{0}^{\tau_{max}}{ }{  u(T-\Omega \tau) d\tau}+ \frac{\epsilon^2 C}{\tau_{max}} u^2(T) \int_{0}^{\tau_{max}}{ u(T- \Omega \tau) d\tau}.\;  \;\;\;\;\;\;
\end{eqnarray}
From this equation, we can determine the properties of the bifurcating limit cycle, which are brought out in the next proposition.
\begin{prop}\label{prop_unif}
Let $P=8A^2$ and $Q=\tau_{cr}(B^2-2AC)$. The amplitude of the bifurcating limit cycle is given by:
\begin{eqnarray}
{A}_m=\sqrt{\frac{P}{Q}\mu},
\end{eqnarray}
where $\mu=\tau_{max}-\tau_{cr}$. Furthermore, the Hopf bifurcation is supercritical.
\end{prop}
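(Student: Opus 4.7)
The plan is to mirror the Lindstedt expansion used in Proposition~\ref{prop_dirac}, adapted to the integral-form delay equation (\ref{repl_unif}). I introduce the series $u(T)=u_0(T)+\epsilon u_1(T)+\epsilon^2 u_2(T)+O(\epsilon^3)$, $\Omega=w_0+\epsilon^2 k_2+O(\epsilon^3)$, and treat the distance from the bifurcation, $\mu=\tau_{max}-\tau_{cr}$, as an $O(\epsilon^2)$ quantity. The goal is to enforce absence of secular terms at $O(\epsilon^2)$, producing two algebraic relations: one fixing $k_2$ in terms of $\mu$, the other yielding $A_m^2$.

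The $O(\epsilon^0)$ balance recovers the linearized integral equation; its neutral periodic solution is $u_0(T)=A_m\cos T$, consistent with $w_0\tau_{cr}=\pi$ from the stability analysis. Using this relation, the fundamental integral $\int_0^{\tau_{cr}}\cos(T-w_0\tau)\,d\tau$ collapses to $(2/w_0)\sin T$, while every integral of $\cos(2T-2w_0\tau)$ or $\sin(2T-2w_0\tau)$ vanishes identically (since $2w_0\tau_{cr}=2\pi$). At $O(\epsilon)$ the quadratic term $B\,u_0\int u_0$ produces only a $\sin(2T)$ forcing; exploiting the second observation, one obtains $u_1$ as a pure second harmonic with coefficient proportional to $A_m^2$. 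At $O(\epsilon^2)$ there are four distinct sources on the right-hand side that must be collected: the cubic term $C u_0^2\int u_0$; the cross term $B u_1\int u_0$ (the symmetric contribution vanishes because $\int u_1$ is zero); the frequency correction $k_2$, which acts both through $k_2 u_0'$ on the left and through $-k_2\int \tau\, u_0'(T-w_0\tau)\,d\tau$ inside the shifted delay; and the perturbation of the upper limit, which produces both a boundary contribution proportional to $u_0(T-w_0\tau_{cr})=-A_m\cos T$ and a $-\mu/\tau_{cr}^2$ rescaling of the prefactor $1/\tau_{max}$.

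Once every term is projected onto the Fourier basis $\{\sin T,\cos T,\sin 3T,\cos 3T\}$, setting the resonant $\sin T$ and $\cos T$ projections to zero gives a linear system in $k_2$ and $A_m^2$. Using $w_0^2=-2A/\tau_{cr}$ from the characteristic equation, the $\cos T$ condition fixes $k_2=-\mu w_0/\tau_{cr}$; substituting into the $\sin T$ condition yields $A_m^2=P\mu/Q$ with the stated $P$ and $Q$. Supercriticality follows by sign inspection: $P=8A^2>0$, and since $A=-\delta\gamma<0$ forces $AC=-\delta^2\gamma<0$, we obtain $B^2-2AC>0$, so $Q>0$; hence $A_m^2>0$ exactly when $\mu>0$, i.e. as $\tau_{max}$ crosses $\tau_{cr}$ from below. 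The main obstacle is the careful $O(\epsilon^2)$ bookkeeping of the distributed-delay integral: one must simultaneously expand the integrand in the $\Omega$-shift, Taylor-expand $1/\tau_{max}$, correct for the moving upper limit $\tau_{cr}+\mu$, and retain the cross-coupling with $u_1$. Each piece contributes non-trivially to both the $\sin T$ and $\cos T$ balances, and dropping any one of them would distort the final formula.
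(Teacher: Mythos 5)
Your plan is essentially the paper's own proof: the same Poincar\'e--Lindstedt expansion with $\mu=O(\epsilon^2)$, the same use of $w_0\tau_{cr}=\pi$ and $w_0^2=-2A/\tau_{cr}$, the vanishing of the second-harmonic integrals so that $u_1$ is a pure second harmonic with $\int_0^{\tau_{cr}}u_1\,d\tau=0$, and elimination of the $\cos T$ and $\sin T$ secular terms at $O(\epsilon^2)$ giving $k_2=-\mu w_0/\tau_{cr}$ and $A_m^2=8A^2\mu/(\tau_{cr}(B^2-2AC))$, so it is correct (your explicit sign argument $A<0$, $C>0$ hence $Q>0$ is a welcome elaboration the paper leaves implicit). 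The only quibbles are cosmetic: the paper multiplies through by $\tau_{max}$ instead of expanding $1/\tau_{max}$ (equivalent bookkeeping), and your closing remark that every $O(\epsilon^2)$ piece feeds both the $\sin T$ and $\cos T$ balances is not quite right --- each piece lands in one projection or the other, which is exactly why the two conditions decouple so cleanly.
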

\begin{proof}
See the Appendix.
\end{proof}
\begin{figure}
\hskip -0.565cm \includegraphics[width=5cm]{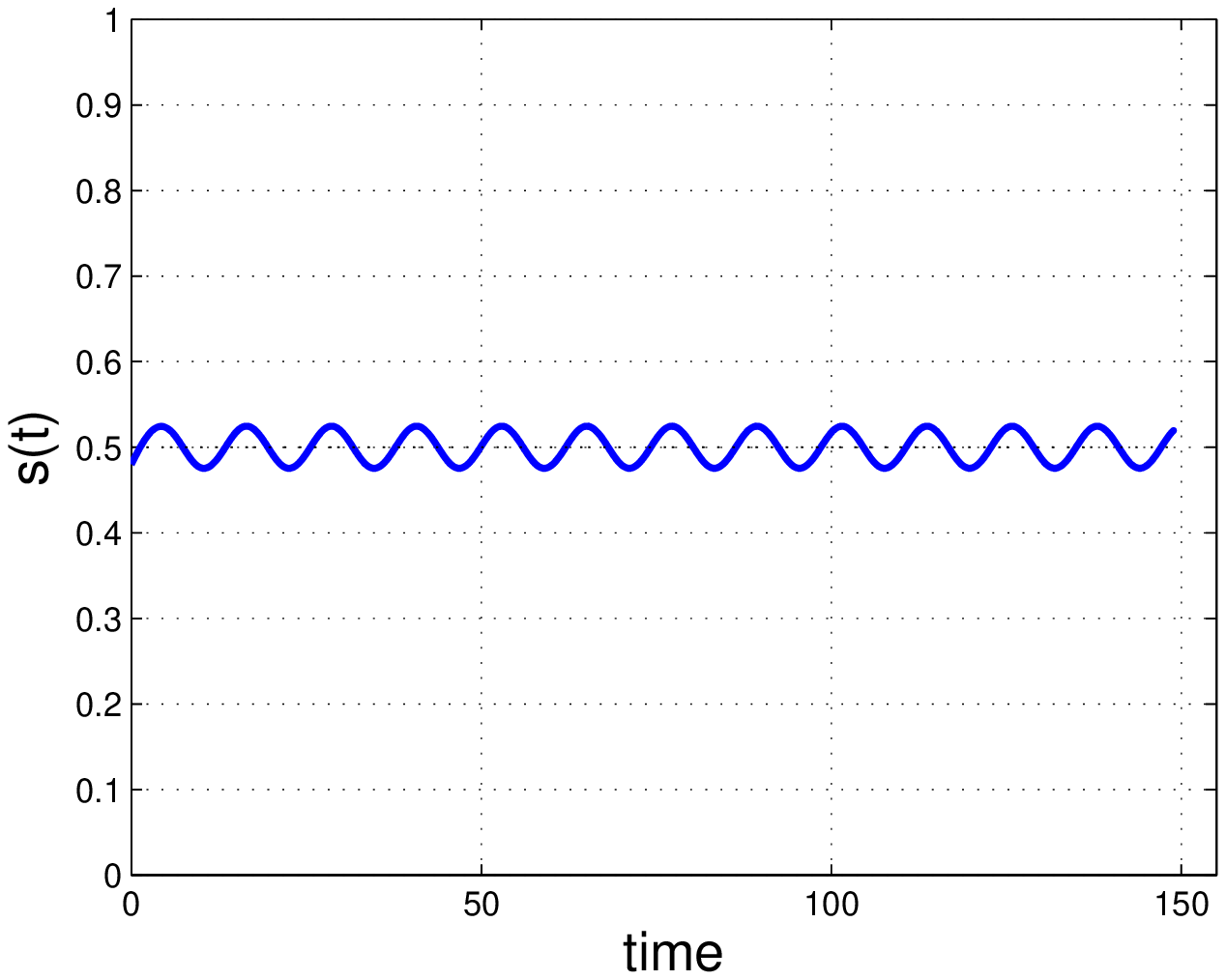}\includegraphics[width=5cm]{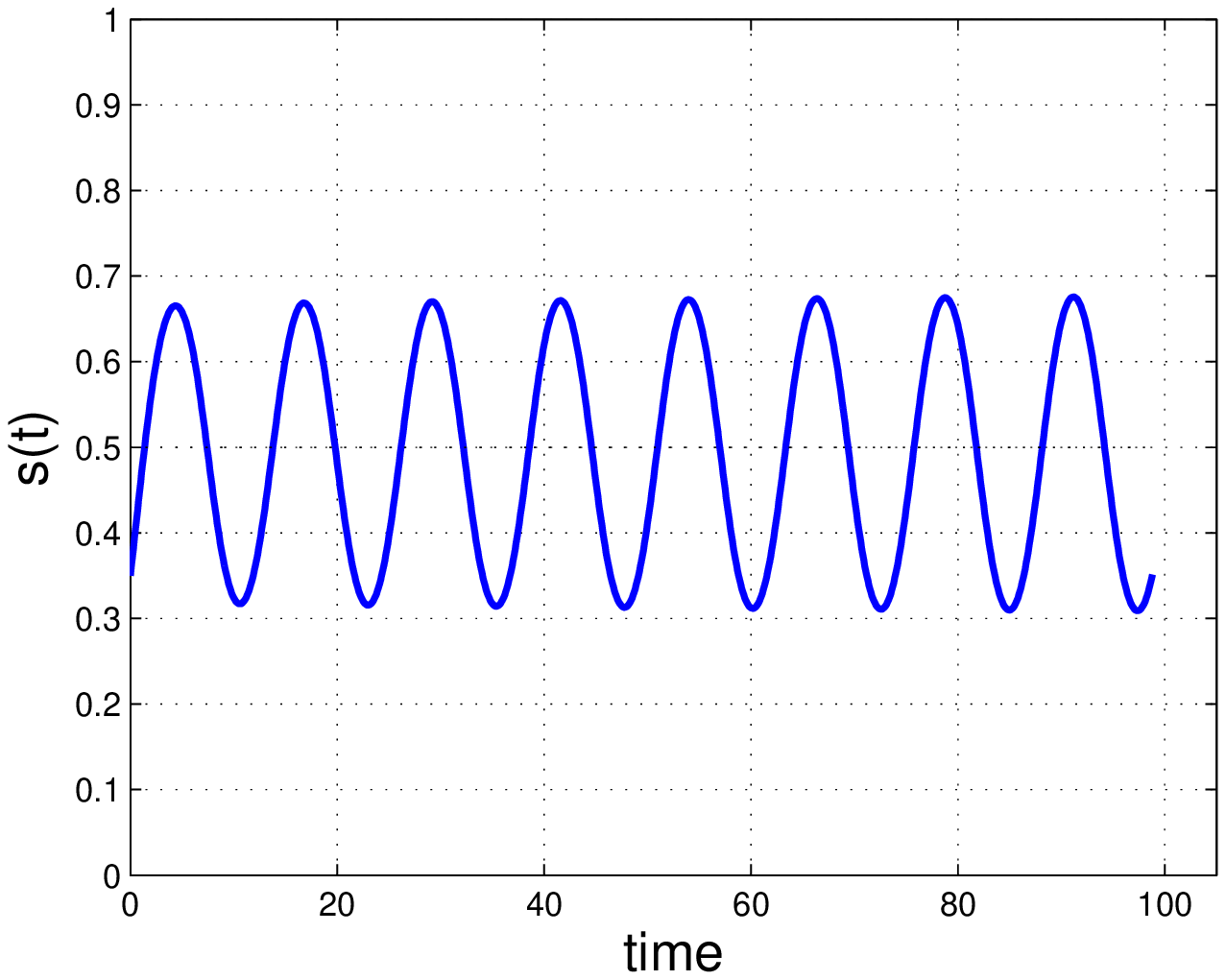} \caption{Stable limit cycle with the uniform distribution with $\tau_{cr}=6.58$ time units. {\it{Left}} $\mu=0.001$, $\tau_{max}=\tau_{cr}+\mu$. {\it{Right}} $\mu=0.03$, $\tau_{max}=\tau_{cr}+\mu$, where $a=-0.5$, $b=3$, $c=0$, and $d=1.5$.}\label{fig_limit_cycle}
\end{figure}
The amplitude of the bifurcating periodic solution is proportional to $\sqrt{\tau_{max}-\tau_{cr}}$. When the value of $\tau_{max}$ is near and superior to $\tau_{cr}$, the replicator dynamics exhibits a stable periodic oscillation in the proportions of the strategies in the population.
We illustrate in Fig. \ref{fig_limit_cycle}, the stable limit cycle occurring near the Hopf bifurcation under the uniform distribution. In the left-subfigure, we fixed $\mu$ to $0.001$ time units whereas in the right-subfigure, $\mu$ is fixed to $0.03$ time units. We recall that $\mu=\tau_{max}-\tau_{cr}$. In the first case, we observe that the stable limit cycle has a very small amplitude, and by increasing $\tau_{max}$, we see in the second case a limit cycle with an amplitude of approximately $0.18$. The amplitude of the oscillation, indeed, increases significantly as $\tau_{max}$ moves away from $\tau_{cr}$.  
 \subsection{Gamma Distribution}
We consider a Gamma distribution of delays with support $[0, \infty[$ and parameters $k\geq1$ and $\beta>0$. The probability distribution in this case is given by:
\begin{eqnarray*}
p(\tau; k,\beta)=\frac{\beta^k \tau^{k-1} e^{-\beta \tau}}{\Gamma(k)},
\end{eqnarray*}
where $\Gamma(k)=(k-1)!$ (Gamma function). The mean of the Gamma distribution is $\frac{k}{\beta}$.
%

The characteristic equation associated to the linearized replicator dynamics is given by:
\begin{eqnarray*}
\lambda+D \int_{0}^{\infty}{\frac{\beta^k}{\Gamma(k)}\tau^{k-1}e^{-(\beta+\lambda)\tau}d\tau}=0,
\end{eqnarray*}
where $D=\delta \gamma$. We take as a bifurcation parameter $\beta$. First, let us determine the critical value of this parameter, $\beta_c$, at which the asymptotic stability of the mixed equilibrium is lost. 
A Hopf bifurcation occurs when $\lambda=iw_0$ with $w_0>0$, passes through the imaginary axis, that is when,
\begin{eqnarray*}
iw_0+D\int_{0}^{\infty}{\frac{\beta_c^k}{\Gamma(k)}\tau^{k-1}e^{-(\beta_c+iw_0)\tau}d\tau}=0.
\end{eqnarray*}
Or equivalently when,
\begin{eqnarray*}
iw_0+\frac{D \beta_c^k}{\Gamma(k)}\int_{0}^{\infty}{\tau^{k-1}e^{-(\beta_c+iw_0)\tau}d\tau}=0.
\end{eqnarray*}
By defining a new variable $z=(\beta_c+iw_0)\tau$, we can write the previous equation as:
\begin{eqnarray}\label{eq_char_00}
iw_0+ \frac{D\beta_c^k}{(\beta_c+iw_0)^k}=0.
\end{eqnarray}
Or equivalently, by using the polar form,
\begin{eqnarray}\label{t_eq}
iw_0+D\beta_c^k(\beta_c^2+w_0^2)^{-\frac{k}{2}}exp(-ik\theta)=0,
\end{eqnarray}
where $\theta  \in [0,\frac{\pi}{2}]$, $\mbox{cos}(\theta)=\frac{\beta_c}{(\beta_c^2+w_0^2)^{\frac{1}{2}}}$, and $\mbox{sin}(\theta)=\frac{w_0}{(\beta_c^2+w_0^2)^{\frac{1}{2}}}$.
Separating the real and imaginary parts in (\ref{t_eq}), we derive,
\begin{eqnarray*}
\mbox{cos}(k\theta)&=&0,\nonumber\\
w_0-D\beta_c^k (\beta_c^2+w_0^2)^{-\frac{k}{2}}\mbox{sin}(k\theta)&=&0.
\end{eqnarray*}
Taking account of the previous equations, we finally get:
\begin{eqnarray}\label{mean_cri}
\beta_c=D\displaystyle\frac{{\rm cos}^{k+1}(\frac{\pi}{2k})}{\mbox{sin}(\frac{\pi}{2k})}.  
\end{eqnarray}
The frequency of oscillations of the bifurcating solution is given by:
\begin{eqnarray}\label{freq2}
w_0=D\mbox{cos}^k(\frac{\pi}{2k}).
\end{eqnarray}
As a remark, we observe that when $k=1$ the Gamma distribution coincides with the exponential distribution, and $\beta_c=0$. Therefore, there cannot exist a Hopf bifurcation in this case. In the following, we suppose that $k\geq 2$.
Furthermore, we derive from equation (\ref{eq_char_00}) by implicit differentiation,
\begin{eqnarray*}
{\cal{R}}e\frac{d \lambda(\beta)}{d\beta}_{|{\beta=\beta_c}}<0.
\end{eqnarray*}
Therefore, when $\beta=\beta_c$ a Hopf bifurcation occurs. 
\begin{figure}[t]
\hskip -0.5cm \includegraphics[width=4.8cm]{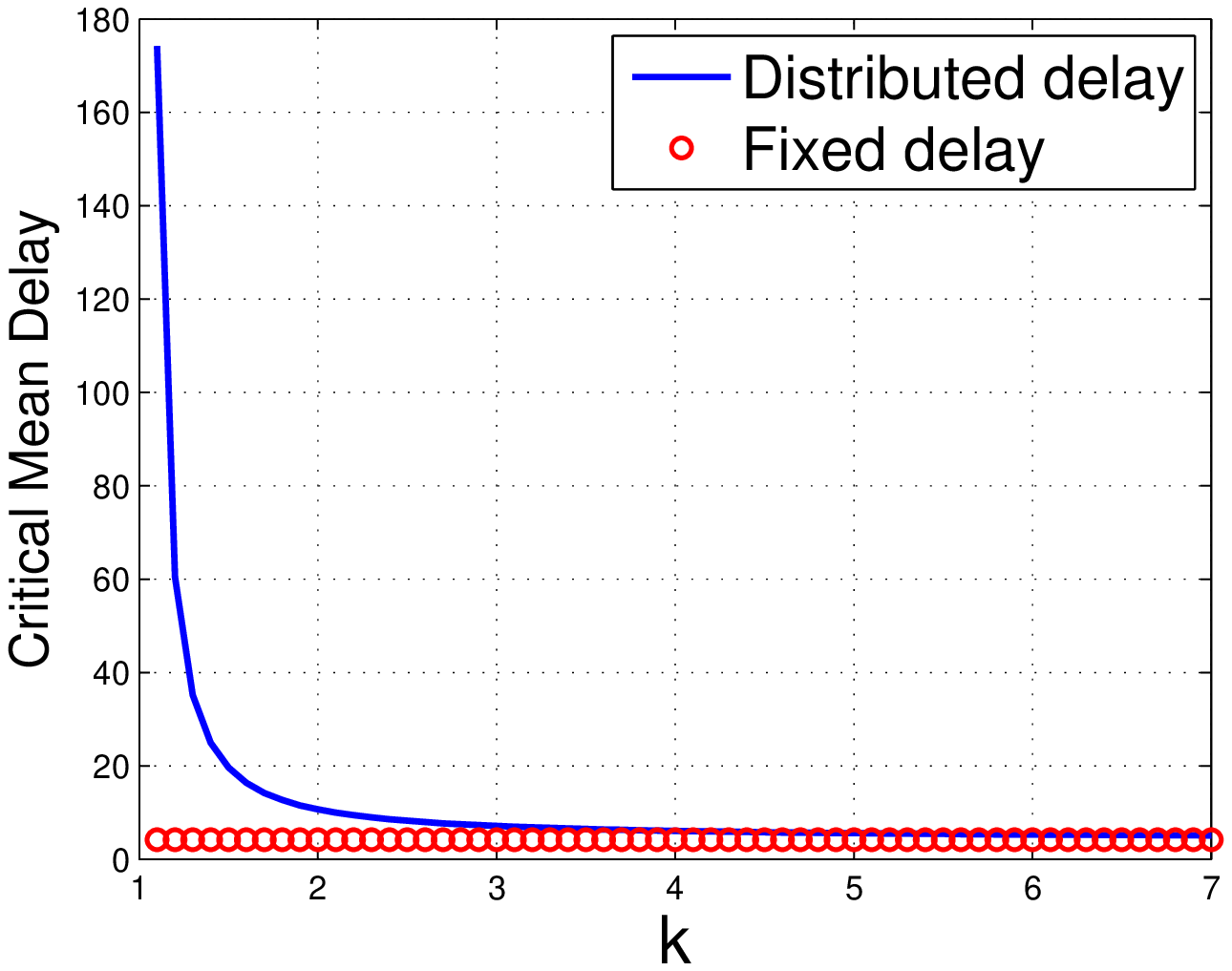}\includegraphics[width=4.8cm]{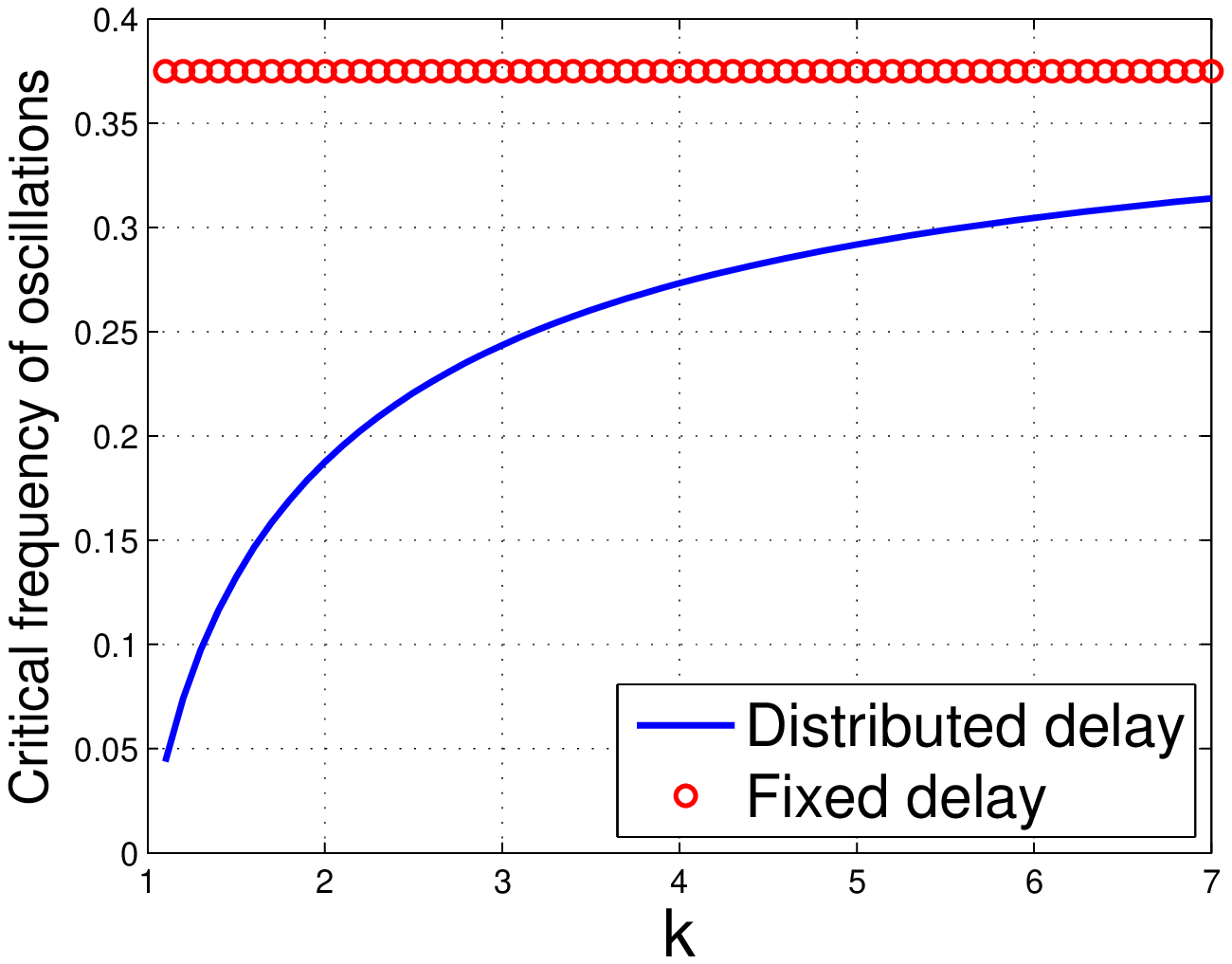}
\caption{{\it{Left}}, the critical mean delay in function of the parameter $k$ under the Gamma distribution. {\it{Right}}, the critical frequency of oscillations in function of the parameter $k$ under the Gamma distribution, where $a=-0.5$, $b=3$, $c=0$, and $d=1.5$.}\label{gamma_mean}
\end{figure}
In Fig. \ref{gamma_mean}-{\it{left}}, we display the critical mean delay $\frac{k}{{\beta_c}}$ with $\beta_c$ given in (\ref{mean_cri}) in function of the parameter $k$. The critical mean delay decreases significantly as the parameter $k$ increases, that is the instability becomes more probable as $k$ grows.

Now, let us determine the properties of the limit cycle in the neighborhood of the bifurcation. We  define ${{I}}$ as follows:
\begin{eqnarray}\label{eq_0001}
{{I}}=\int_{0}^{\infty}{\frac{{\beta}^k}{\Gamma(k)} \tau^{k-1}e^{-\beta \tau}u(T-\Omega \tau) d\tau}.
\end{eqnarray}
The equation (\ref{eq_tot}) can then be written as,
\begin{eqnarray}
\Omega \frac{du(T)}{dT}&=&A {{I}}+\epsilon B u(T) {{I}} + \epsilon^2 C  u^2(T){{I}},
\end{eqnarray}
where $\Omega=w_0+k_2 \epsilon^2+{{O}}(\epsilon^3)$. 
The properties of the bifurcating limit cycles are given in the following proposition.
\begin{prop}\label{prop_gamma}
We consider the following parameters,
\begin{eqnarray*}
P= (k+1) \frac{A}{\beta_c}(1+\frac{w_0^2}{\beta_c^2})^{-\frac{k}{2}}-\frac{k-1}{k+1}(1+\frac{w_0^2}{\beta_c^2})^{\frac{1}{2}}-\frac{w_0}{\beta_c},
\end{eqnarray*}
\begin{eqnarray*}
Q=\frac{B\beta_c}{2(k+1)A} (1+\frac{w_0^2}{\beta_c^2})^{\frac{1}{2}} (F_1 \frac{w_0}{\beta_c}+F_2)-B \frac{w_0}{\beta_c} \times \\ (1+\frac{w_0^2}{\beta_c^2})^{-\frac{k+1}{2}}(F_2+\frac{F_1}{2}(\frac{w_0}{\beta_c}-1))+\frac{C}{4}(1+\frac{w_0^2}{\beta_c^2})^{-\frac {k}{2}},
\end{eqnarray*}
\begin{eqnarray*}\label{f_1}
\hskip -0.2cm{{F}}_1=-\frac{\frac{AB}{2} (1+\frac{w_0^2}{\beta_c^2})^{-\frac{k}{2}} (1+4\frac{w_0^2}{\beta_c^2})^{-\frac{k}{2}} \mbox{cos}(k\theta_1)}{4w_0^2+A^2(1+4\frac{w_0^2}{\beta_c^2})^{-k}+4w_0A (1+4\frac{w_0^2}{\beta_c^2})^{-\frac{k}{2}}\mbox{sin}(k\theta_1)},
\end{eqnarray*}
and,
\begin{eqnarray*}\label{f_2}
\hskip -0.2cm {{F}}_2=-\frac{\frac{B}{2} (1+\frac{w_0^2}{\beta_c^2})^{-\frac{k}{2}}(2w_0+A (1+4\frac{w_0^2}{\beta_c^2})^{-\frac{k}{2}} \mbox{sin}(k\theta_1)) }{4w_0^2+A^2(1+4\frac{w_0^2}{\beta_c^2})^{-k}+4w_0A (1+4\frac{w_0^2}{\beta_c^2})^{-\frac{k}{2}}\mbox{sin}(k\theta_1)}.
\end{eqnarray*}
The amplitude of the bifurcating limit cycle is given by:
\begin {eqnarray}\label{amp_gamma}
A_m=\sqrt{\frac{P}{Q} \mu},
\end{eqnarray}
where $\mu=\beta-\beta_c$ and $\theta_1=\mbox{atan}(\frac{2w_0}{\beta_c})$. Furthermore, the Hopf bifurcation is supercritical.
\end{prop}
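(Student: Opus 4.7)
The plan is to apply the Poincar\'e--Lindstedt expansion sketched in Section~\ref{sec_repl} to the Gamma-kernel equation, taking $\beta$ as bifurcation parameter. I would introduce the unfolding $\beta = \beta_c + \mu$ with $\mu = O(\epsilon^2)$, together with the standard series $u = u_0 + \epsilon u_1 + \epsilon^2 u_2 + O(\epsilon^3)$ and $\Omega = w_0 + \epsilon^2 k_2 + O(\epsilon^3)$. The decisive structural fact I would exploit is that the operator $I$ of~(\ref{eq_0001}) acts diagonally on harmonics through the characteristic function of the Gamma distribution: $I[e^{inT}] = (\beta/(\beta + in\Omega))^k e^{inT}$. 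Writing $\beta_c + inw_0$ in polar form, this identity produces the phase $\theta$ for $n = 1$ (already used in deriving (\ref{eq_char_00})--(\ref{freq2})) and the new phase $\theta_1 = \arctan(2w_0/\beta_c)$ for $n = 2$, so every convolution arising in the expansion can be evaluated in closed form.

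At order $\epsilon^0$, the ansatz $u_0(T) = \cos T$ solves $w_0 \dot u_0 = A\,I[u_0]$ identically by virtue of the characteristic equation~(\ref{eq_char_00}). At order $\epsilon^1$, the forcing $B u_0 I[u_0]$ contains only a constant and an $e^{\pm 2iT}$ component; since $2iw_0$ is not a root of the characteristic polynomial, the operator $w_0 \partial_T - A\,I$ is invertible at the second harmonic. Rationalizing this inverse yields exactly the denominators of $F_1$ and $F_2$ in the statement, while the numerators come from the $\{\sin 2T, \cos 2T\}$ components of $B u_0 I[u_0]$. Hence $u_1 = \alpha_0 + F_1 \sin 2T + F_2 \cos 2T$, the constant $\alpha_0$ being irrelevant to the subsequent steps.

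At order $\epsilon^2$, I would collect
\[ w_0 \dot u_2 + k_2 \dot u_0 = A\,I[u_2] + A \mu\, \partial_\beta I[u_0]\big|_{\beta_c} + B\bigl(u_0 I[u_1] + u_1 I[u_0]\bigr) + C u_0^2 I[u_0], \]
and impose the Fredholm solvability condition by projecting the right-hand side onto the resonant modes. The projection onto $\cos T$ fixes the frequency correction $k_2$; the projection onto $\sin T$ is purely algebraic in $\mu$ and $A_m^2$ and takes the form $P\mu - Q A_m^2 = 0$, delivering $A_m = \sqrt{(P/Q)\mu}$. The expression for $P$ emerges from differentiating $(\beta/(\beta+iw_0))^k$ in $\beta$ at $\beta_c$, which produces the $(k+1)A/\beta_c$, $(k-1)/(k+1)$, and $w_0/\beta_c$ contributions. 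The expression for $Q$ combines the $F_1, F_2$ coefficients (entering through $u_1 I[u_0]$ and $u_0 I[u_1]$) with the cubic $\tfrac{1}{4} C u_0^2 I[u_0]$ contribution. Supercriticality would then follow by checking that $P/Q$ has the sign compatible with the transversality $\operatorname{Re}(d\lambda/d\beta)|_{\beta_c} < 0$ established just before the proposition.

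The principal technical obstacle will be the algebraic bookkeeping: computing $\partial_\beta I[u_0]|_{\beta_c}$ in the explicit polar form, expanding the quadratic and cubic nonlinearities in the products of the fundamental with the second-harmonic components of $u_1$, and collapsing the resulting trigonometric sums into compact closed form via $\cos(k\pi/(2k)) = 0$ and the identities $\cos\theta = \beta_c/\sqrt{\beta_c^2 + w_0^2}$, $\sin\theta = w_0/\sqrt{\beta_c^2 + w_0^2}$. Unlike the Dirac and uniform cases, here one must track both the shape parameter $k$ and the rate $\beta$ dependence simultaneously, which is precisely what gives rise to the more intricate form of $P$ and $Q$.
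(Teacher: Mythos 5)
Your proposal follows essentially the same route as the paper's proof: a Poincar\'e--Lindstedt expansion in $\epsilon$ with $\beta=\beta_c+\epsilon^2\hat\mu$, evaluation of the convolution on each harmonic through the Gamma kernel's Laplace transform in polar form (yielding $\theta$ at the first harmonic and $\theta_1$ at the second, hence $F_1,F_2$ for $u_1$), and removal of the resonant $\cos T$ and $\sin T$ terms at order $\epsilon^2$ to obtain $P\mu=QA_m^2$ and supercriticality from the transversality condition. The only slip is your claim that the $\sin T$ solvability condition is purely algebraic in $\mu$ and $A_m^2$: it also contains $k_2$ (through $k_2\,du_0/dT$ and the $\Omega$-derivative of the delayed argument inside the kernel), so $k_2$ must be eliminated between the two projections --- a bookkeeping point, not a change of method.
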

\begin{proof}
See the Appendix.
\end{proof}
As in the previous sections, the amplitude of the bifurcating limit cycle is proportional to $\sqrt{\beta_c-\beta}$. Note that the bifurcation occurs when $\beta$ is near $\beta_c$ and $\beta<\beta_c$, therefore the quotient $\frac{P}{Q}$ should be negative. When $\beta$ is near and below $\beta_c$, the replicator dynamics exhibits a stable periodic oscillation in the proportion of strategies in the population.
\subsection{Discrete Delays}
We suppose in this section that a strategy, either $A$ or $B$, would take a delay $\tau$ with probability $p$ or no delay with probability $1-p$. In this case, the replicator dynamics is given by:
\begin{eqnarray*}\label{eq_8899}
\hskip -0.1cm \frac{ds(t)}{dt}=s(t)(1-s(t))\big(- p \delta s(t-\tau)-(1-p) \delta  s(t)+\delta_1\big).
\end{eqnarray*}
Let $x(t)=s(t)-s^*$. Substituting $s$ with $x$ in the replicator dynamics, we get:
\begin{eqnarray*}
\frac{dx(t)}{dt}=-(1-p)\delta \gamma x(t) -p\delta \gamma x(t-\tau)-p\delta (1-2s^*)\times \nonumber \\ x(t) x(t-\tau)- (1-p)\delta(1-2s^*)x^2(t)+p \delta x(t-\tau) x^2(t)\nonumber \\ +(1-p)\delta x^3(t),
\end{eqnarray*}
which is of the form,
\begin{eqnarray*}
\frac{dx(t)}{dt}=a_1 x(t)+b_1 x(t-\tau)+c_1 x(t) x(t-\tau)+d_1x^2(t)+\nonumber \\e_1 x(t-\tau) x^2(t)+f_1 x^3(t),
\end{eqnarray*}
where $a_1=-(1-p)\delta \gamma $, $b_1= -p\delta \gamma $, $c_1=-p\delta (1-2s^*)$, $d_1=-(1-p)\delta(1-2s^*)$, $e_1=p \delta $, $f_1=(1-p)\delta$.
The linearized equation is given by:
\begin{eqnarray*}
\frac{dx(t)}{dt}=a_1 x(t) +b_1 x(t-\tau).
\end{eqnarray*}
The associated characteristic equation is determined by:
\begin{eqnarray}\label{eq_145}
\lambda-b_1 exp(-\lambda \tau)-a_1=0.
\end{eqnarray}
From the characteristic equation above, we derive the following result on the local asymptotic stability of the equilibrium.
\begin{prop}\label{prop_di_1}
\begin{itemize}
\item[$\bullet$] If $p\leq 0.5$, then the mixed ESS is asymptotically stable in the replicator dynamics for any value of $\tau$,
\item[$\bullet$] If $p>0.5$, then a Hopf bifurcation exists, when $\tau=\tau_{cr}$, with $\tau_{cr}=\frac{\mbox{acos}(-\frac {1-p}{p})}{\delta\gamma \sqrt{2p-1}}$.
\end{itemize}
\end{prop}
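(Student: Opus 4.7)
The plan is to carry out the standard stability analysis of the characteristic equation (\ref{eq_145}) by searching for pure imaginary roots, and then handling the two cases $p\leq 1/2$ and $p>1/2$ separately.

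First I would substitute $\lambda=iw_0$ with $w_0>0$ into (\ref{eq_145}) and split into real and imaginary parts, obtaining
\begin{eqnarray*}
-a_1-b_1\cos(w_0\tau)=0,\qquad w_0+b_1\sin(w_0\tau)=0.
\end{eqnarray*}
Squaring and adding eliminates $\tau$ and gives $w_0^2=b_1^2-a_1^2$. Plugging in $a_1=-(1-p)\delta\gamma$ and $b_1=-p\delta\gamma$ yields $w_0^2=(2p-1)\delta^2\gamma^2$. This is where the dichotomy in the statement comes from: a positive real $w_0$ exists if and only if $p>1/2$.

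For the first bullet ($p\leq 1/2$), the above shows that no root of (\ref{eq_145}) can cross the imaginary axis as $\tau$ varies. At $\tau=0$ the characteristic equation degenerates to $\lambda=a_1+b_1=-\delta\gamma<0$, so the (unique) root has negative real part and the equilibrium is asymptotically stable. Since no root can cross the imaginary axis for any $\tau\geq 0$ and the spectrum depends continuously on $\tau$, asymptotic stability is preserved for all $\tau\geq 0$ by the standard continuity argument for DDEs (see \cite{gopalsamy}).

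For the second bullet ($p>1/2$), I would take $w_0=\delta\gamma\sqrt{2p-1}$ from the above, substitute back into the real part to get $\cos(w_0\tau_{cr})=-a_1/b_1=-(1-p)/p$, and solve for the smallest positive $\tau_{cr}$, which gives exactly the formula in the proposition (the sign of $\sin(w_0\tau_{cr})=-w_0/b_1>0$ puts $w_0\tau_{cr}$ in $(0,\pi)$, so the principal branch of $\arccos$ is the correct one). To ensure an actual Hopf bifurcation and not just a crossing, I would verify the transversality condition by implicit differentiation of (\ref{eq_145}):
\begin{eqnarray*}
\frac{d\lambda}{d\tau}=\frac{-\lambda b_1 e^{-\lambda\tau}}{1+b_1\tau e^{-\lambda\tau}},
\end{eqnarray*}
and evaluate the real part at $\lambda=iw_0$, $\tau=\tau_{cr}$, using the identities from the real/imaginary split to eliminate the exponentials and show it is nonzero (in fact positive).

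The main obstacle is not algebraic — the calculations above are routine — but rather the argument that for $p\leq 1/2$ stability holds for \emph{all} $\tau$. The cleanest way is the contradiction I sketched: if stability were lost at some $\tau^\star$, continuity of the spectrum would force a root on the imaginary axis at $\tau^\star$, contradicting $b_1^2<a_1^2$. Verifying the transversality direction in the $p>1/2$ case is the only other place where a small computation is needed.
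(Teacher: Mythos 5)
Your proposal is correct, and the second bullet (finding $w_0=\delta\gamma\sqrt{2p-1}$, identifying the branch of $\arccos$ from the sign of $\sin(w_0\tau_{cr})$, and checking transversality by implicit differentiation, which indeed evaluates to $\Re\,\lambda'(\tau_{cr})=w_0^2/\bigl((1-a_1\tau_{cr})^2+\tau_{cr}^2w_0^2\bigr)>0$) matches the paper's proof step for step. Where you diverge is the first bullet. You rule out pure imaginary roots via $w_0^2=b_1^2-a_1^2\le 0$ for $p\le 1/2$ and then invoke continuity of the spectrum together with stability at $\tau=0$ to conclude stability for all $\tau$; this is the standard D-subdivision/crossing argument and is valid for retarded equations (roots cannot enter the right half-plane except through the imaginary axis), but it does lean on that qualitative fact about DDE spectra. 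The paper instead argues directly and self-containedly: it supposes a root $\lambda=u+iv$ with $u>0$, derives $\bigl(u+(1-p)\delta\gamma\bigr)^2+v^2=p^2\delta^2\gamma^2e^{-2u\tau}\le p^2\delta^2\gamma^2$, lower-bounds the left side by $\bigl((1-p)\delta\gamma\bigr)^2$, and concludes $(1-p)<p$, so any right-half-plane root forces $p>1/2$ — no continuity argument needed. Both routes work; the paper's is more elementary, yours is the more standard template. One small loose end in your version: besides $w_0>0$ you should explicitly exclude a root at $\lambda=0$ for \emph{every} $\tau$ (not only at $\tau=0$), which holds since $a_1+b_1=-\delta\gamma\neq 0$ independently of $\tau$; without that remark the "no crossing" claim is incomplete, though trivially repaired.
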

\begin{proof}\label{prop0}
\begin{itemize}
\item[$\bullet$] Let $\lambda=u+iv$, where $v>0$ a root of (\ref{eq_145}). We suppose that $u>0$ and we aim to prove that $p>0.5$. Substituting $\lambda$ by $u+iv$ in equation (\ref{eq_145}) and separating the real and imaginary parts, we derive,
\begin{eqnarray*}
u +(1-p) \delta \gamma=-p\delta \gamma e^{-u\tau} \mbox{cos}(v\tau),\\
v=p\delta \gamma  e^{-u\tau} \mbox{sin}(v\tau).
\end{eqnarray*}
which yields,
\begin{eqnarray}
\big(u +(1-p) \delta \gamma \big)^2+v^2=p^2 \delta^2 \gamma^2 e^{-2u\tau}.
\end{eqnarray}
Since $u>0$, we conclude the following inequalities,
\begin{eqnarray*}
\big(u +(1-p) \delta \gamma \big)^2+v^2 \leq p^2 \delta^2 \gamma^2 , \\
\big((1-p) \delta \gamma \big)^2   \leq \big(u +(1-p) \delta \gamma \big)^2+v^2.
\end{eqnarray*}
Finally, from these inequalities, we obtain,
\begin{eqnarray*}
(1-p) < p, \mbox{ and } p > 0.5.
\end{eqnarray*}
Therefore, $u<0$ for any $p \leq 0.5$, and the asymptotic stability follows.
\item[$\bullet$]Let $\lambda^*=iw_0$ where $w_0>0$ be a root of the characteristic equation. From (\ref{eq_145}), we get,
\begin{eqnarray*}
iw_0+p\gamma \delta exp(-iw_0 \tau_{cr})+(1-p) \gamma \delta=0,
\end{eqnarray*}
which yields,
\begin{eqnarray*}
\mbox{cos}(w_0\tau_{cr})=-\frac{1-p}{p},\mbox{ and }
\mbox{sin}(w_0\tau_{cr})=\frac{w_0}{p\gamma \delta}.
\end{eqnarray*}
Finally, we get:
\begin{eqnarray*}
\tau_{cr}=\frac{\mbox{acos}(-\frac {1-p}{p})}{\delta\gamma \sqrt{2p-1}},\mbox{ and }
w_0=\delta\gamma \sqrt{2p-1},
\end{eqnarray*}
where $'\mbox{acos}'$ denotes the $0$ to $\pi$ branch of the inverse cosine function.
Furthermore, we have,
\begin{eqnarray*}
{\cal{R}}e \frac{d\lambda(\tau)}{d\tau}_{|\tau=\tau_{cr}}=\frac{w_0^2}{(1-a_1 \tau_{cr})^2+ \tau_{cr}^2 w_0^2}>0,
\end{eqnarray*}
which means that when $\tau$ is near $\tau_{cr}$ and $\tau>\tau_{cr}$, two roots gain positive parts as $\tau$ passes through $\tau_{cr}$.
Therefore, when $p\geq 0.5$, a Hopf bifurcation exists at $\tau_{cr}$.
\end{itemize}
\end{proof}
As a remark, we notice that when $p=1$, the critical delay is given by $\tau_{cr}=\frac{\pi}{2\delta\gamma}$ and this value coincides with that obtained in the Dirac distribution case.
\begin{figure}[t]
 \hskip -0.53cm\includegraphics[width=4.97cm]{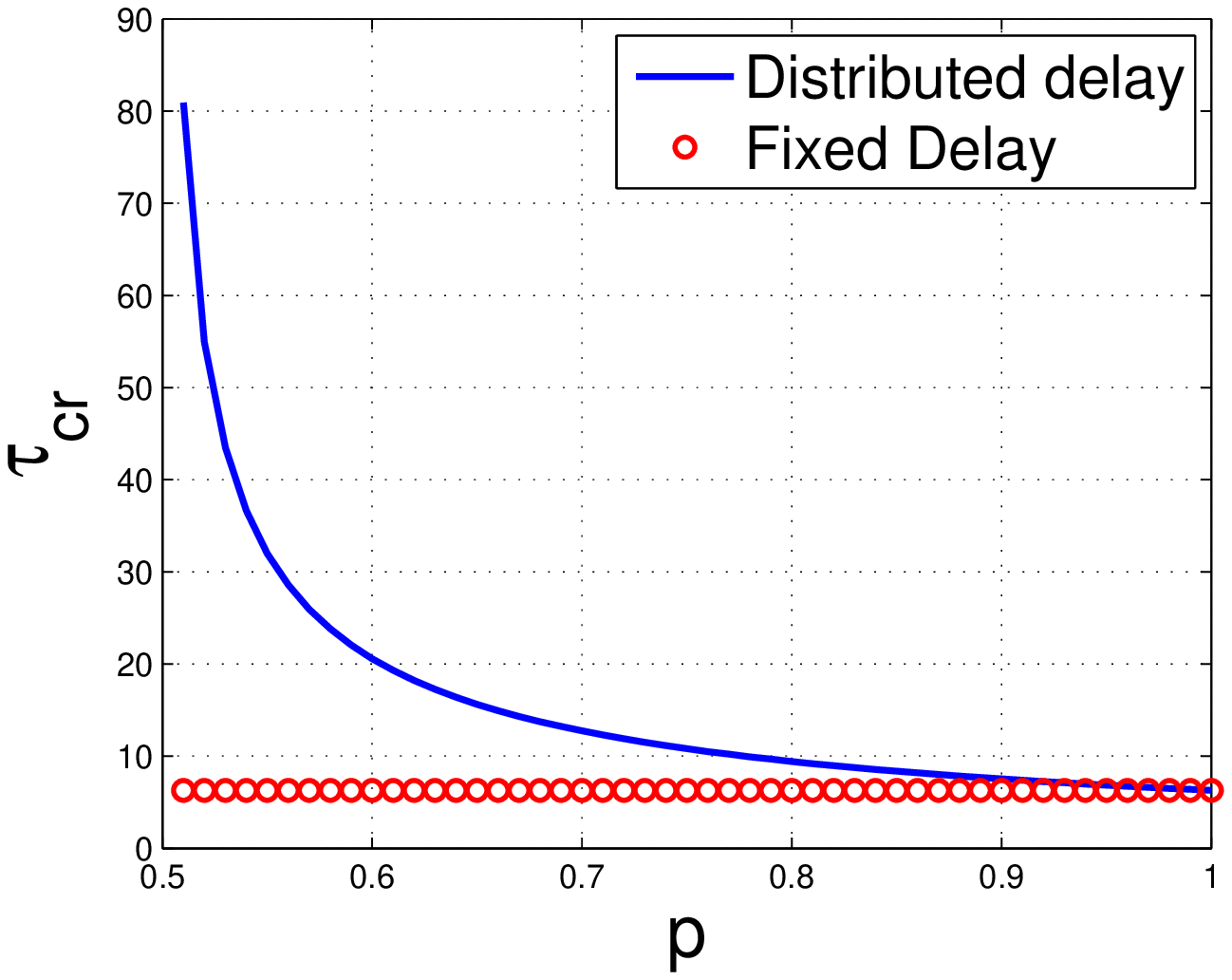}\includegraphics[width=4.97cm]{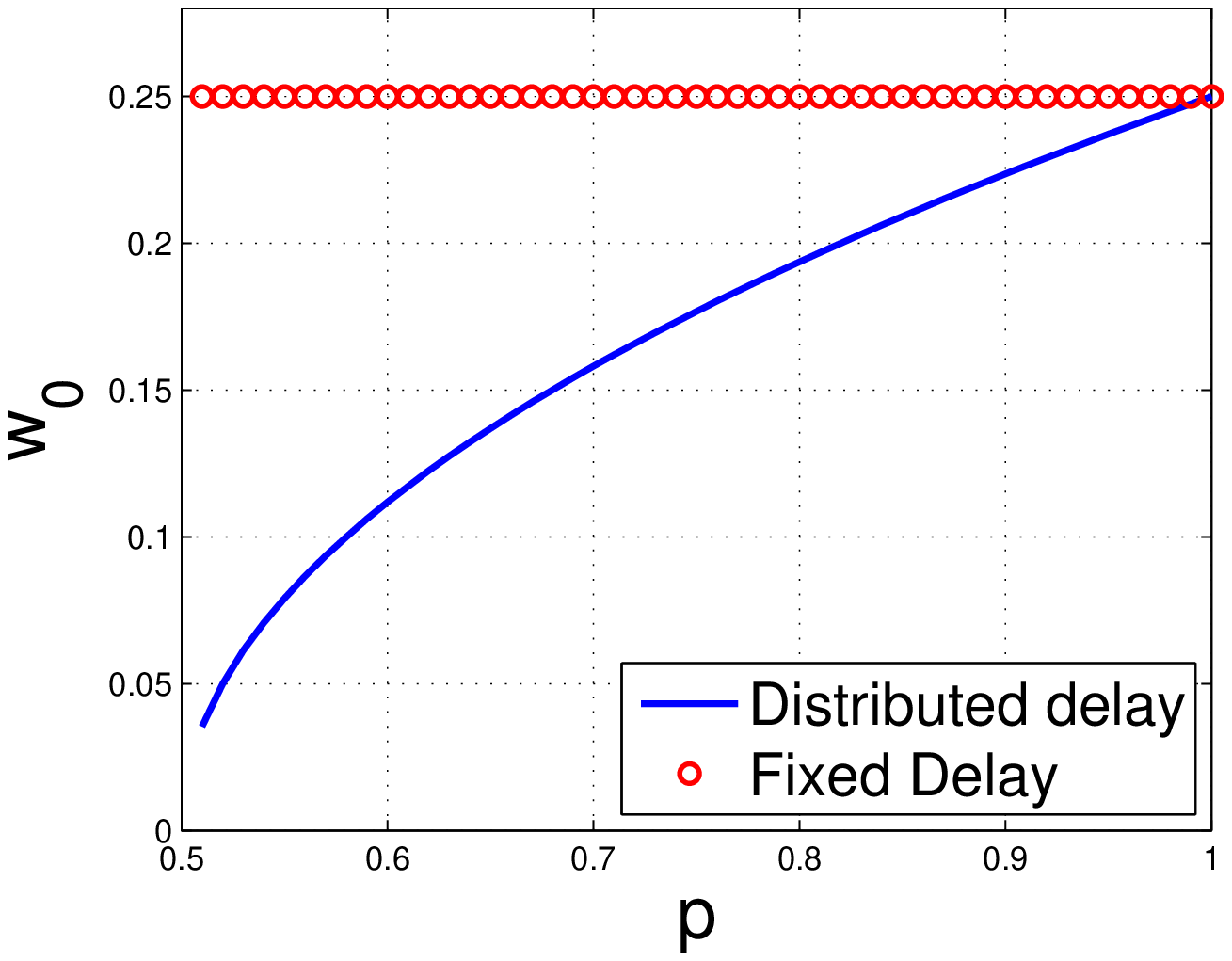} \caption{Left, the critical delay $\tau_{cr}$ in function of $p$. Right, the frequency of oscillations at the Hopf bifurcation, $w_0$, in function of $p$, where $a=-0.5$, $b=1$, $c=0$, and $d=0.5$.}\label{fig_discrete}
\end{figure}
In Fig. \ref{fig_discrete}, we plot the critical delay $\tau_{cr}$ and the frequency of oscillations at the Hopf bifurcation, in function of $p$, the probability of a delayed strategy. The range of $p$ at which a Hopf bifurcation may exist is $]0.5,1]$. We observe that as $p$ increases, the critical delay decreases, and thus the potential of instability increases. For instance, when $p=0.6$, the critical delay is given by $20.5$ time units, whereas this value decreases to $9.4$ time units when $p=0.8$. In addition, the frequency of oscillations at the Hopf bifurcation grows gradually as $p$ increases, which emphasizes the instability property. For example, when $p=0.6$, $w_0=0.11$, while $w_0=0.19$ when $p=0.8$. 

It is also interesting to compare the results in our scenario with those obtained in the classical case of a single deterministic delay. Therefore, we displayed in Fig. \ref{fig_discrete} the critical delay value (which we denote by $\tau_{c0}$ and the frequency of oscillation (which we denote by $w_{c0}$) in the case of a single delay.  We observe that $\tau_{cr}$ (as defined in proposition \ref{prop_di_1}) is always larger than $\tau_{c0}$ and they coincide only when $p=1$. Similarly, $w_0$ is always smaller than $w_{c0}$ and they coincide when $p=1$, in which case the two scenarios are exactly the same.

Furthermore, the properties of the bifurcating limit cycle are brought out in the next proposition.
\begin{prop}\label{prop_di_2}
Let $P$ and $Q$ be defined as follows:
\begin{eqnarray*}
P=4b_1^3(b_1-a_1)(a_1+b_1)^2(-5b_1+4a_1),
\end{eqnarray*}
and
\begin{eqnarray*}
Q&=&5e_1b_1^6 \tau_{cr}+a_1e_1b_1^5 \tau_{cr}-15a_1f_1b_1^5 \tau_{cr}-3c_1^2b_1^2 \tau_{cr}\\&&-7c_1d_1b_1^5\tau_{cr}-4d_1^2b_1^5 \tau_{cr}+
6 a_1^2e_1b_1^4\tau_{cr}-3a_1^2f_1b_1^4 \tau_{cr}\\&&+7c_1^2a_1b_1^4 \tau_{cr}+19c_1d_1a_1b_1^4 \tau_{cr}+18 d_1^2a_1b_1^4 \tau_{cr}+\\&&2a_1^3e_1b_1^3\tau_{cr}+12a_1^3f_1b_1^3\tau_{cr}-12c_1^2a_1^2b_1^3\tau_{cr}- \\&& 26c_1d_1a_1^2b_1^3\tau_{cr}-8d_1^2a_1^2b_1^3\tau_{cr}-8a_1^4e_1b_1^2\tau_{cr}+\\&&8c_1^2a_1^3b_1^2\tau_{cr}+
8c_1d_1a_1^3b_1^2\tau_{cr}+15f_1b_1^5-15a_1e_1b_1^4+\\&&3a_1f_1b_1^4-c_1^2b_1^4-9c_1d_1b_1^4-18d_1^2b_1^4-3a_1^2e_1b_1^3-\\&&12a_1^2f_1b_1^3+
11c_1^2a_1b_1^3+33c_1d_1a_1b_1^3+12d_1^2a_1b_1^3+\\&&12a_1^3e_1b_1^2-14c_1^2a_1^2b_1^2-18c_1d_1a_1^2b_1^2+4c_1^2a_1^3b_1.
\end{eqnarray*} 
The amplitude of the bifurcating limit cycle is given by:
\begin{eqnarray*}
A_m=\sqrt{\frac{P}{Q}\mu},
\end{eqnarray*}
where $\mu=\tau-\tau_{cr}$. Furthermore, the Hopf bifurcation is supercritical.
\end{prop}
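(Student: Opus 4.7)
The plan is to apply the Poincaré-Lindstedt scheme of Section \ref{hopf} in exactly the same spirit as Proposition \ref{prop_dirac}, but now for the full nonlinear discrete-delay DDE
\[\dot x(t)=a_1 x(t)+b_1 x(t-\tau)+c_1 x(t)x(t-\tau)+d_1 x^2(t)+e_1 x(t-\tau) x^2(t)+f_1 x^3(t).\]
I would introduce $x(t)=\epsilon u(T)$ with $T=\Omega t$, write $\Omega=w_0+\epsilon^2 k_2+O(\epsilon^3)$ and $u=u_0+\epsilon u_1+\epsilon^2 u_2+O(\epsilon^3)$, and decompose the bifurcation parameter as $\tau=\tau_{cr}+\mu$. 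The delayed argument then reads $u(T-\Omega\tau)$, which I Taylor-expand in both $\mu$ and $\epsilon^2 k_2\tau_{cr}$ to identify the contributions to each order in $\epsilon$. Substituting and grouping by powers of $\epsilon$ produces three linear DDEs to be solved sequentially.

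At order $\epsilon^0$ one recovers the characteristic DDE at the Hopf point $(\tau_{cr},w_0)$, whose bounded solution by Proposition \ref{prop_di_1} is $u_0(T)=A_m\cos T$, with the identities $\cos(w_0\tau_{cr})=-(1-p)/p$ and $\sin(w_0\tau_{cr})=w_0/(p\delta\gamma)$ kept on hand for later simplifications. At order $\epsilon^1$ the forcing is the quadratic part $c_1 u_0(T)u_0(T-w_0\tau_{cr})+d_1 u_0^2(T)$, which after product-to-sum identities reduces to a constant plus second-harmonic terms $\cos 2T$ and $\sin 2T$; none resonates with the $\epsilon^0$ operator, so one can solve for a particular $u_1$ as an explicit linear combination of $1$, $\cos 2T$ and $\sin 2T$ whose coefficients I would determine by matching.

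At order $\epsilon^2$ four contributions collect on the right-hand side: the cubic terms $e_1 u_0(T-w_0\tau_{cr})u_0^2(T)+f_1 u_0^3(T)$; the bilinear cross-terms coupling $u_0$ and $u_1$ through $c_1$ and $d_1$; the delay-correction $b_1(-k_2\tau_{cr})u_0'(T-w_0\tau_{cr})$ coming from $\Omega=w_0+\epsilon^2 k_2$; and the $\mu$-perturbation $b_1(-\mu)u_0'(T-w_0\tau_{cr})$ coming from $\tau=\tau_{cr}+\mu$, balanced on the left by $-k_2\dot u_0$. Setting the coefficients of the secular resonances $\cos T$ and $\sin T$ to zero yields a $2\times 2$ linear system in the unknowns $k_2$ and $A_m^2$; eliminating $k_2$ produces the announced closed form $A_m^2=(P/Q)\mu$. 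Supercriticality then follows by checking, using the Hopf identities and the signs of $a_1,\ldots,f_1$, that $P/Q>0$ for $\tau>\tau_{cr}$, so the emerging cycle is stable.

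The main obstacle is not conceptual but combinatorial: extracting the $\cos T$ and $\sin T$ coefficients at order $\epsilon^2$ requires careful bookkeeping of the delay-phase factors $\cos(2w_0\tau_{cr})$ and $\sin(2w_0\tau_{cr})$ attached to the second-harmonic part of $u_1$, and the resulting expressions must be simplified using both the Hopf relations above and the explicit definitions $a_1=-(1-p)\delta\gamma$, $b_1=-p\delta\gamma$, $c_1=-p\delta(1-2s^*)$, etc. The sheer length of $Q$ in the statement is a direct symptom of this expansion, and the simplification is most reliably executed with the help of a symbolic algebra system.
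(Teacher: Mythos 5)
Your proposal is correct and follows essentially the same route as the paper, whose own proof simply invokes ``the same procedure as in the previous sections'', i.e.\ the Poincar\'e--Lindstedt expansion carried out in the appendix for the Dirac and uniform cases; you even correctly flag the one new feature of the discrete-delay case, namely that $\cos(w_0\tau_{cr})=-(1-p)/p\neq 0$ and the $d_1x^2$ term force a zeroth-harmonic (constant) component in $u_1$ alongside the $\cos 2T$, $\sin 2T$ terms. One small slip in your sketch: since $\Omega\tau=w_0\tau_{cr}+\epsilon^2(k_2\tau_{cr}+w_0\hat{\mu})+O(\epsilon^3)$, the order-$\epsilon^2$ delay correction is $-b_1(k_2\tau_{cr}+w_0\hat{\mu})\,u_0'(T-w_0\tau_{cr})$, so the $\mu$-perturbation carries a factor $w_0$ that your displayed term omits, although the Taylor expansion you describe would produce it automatically.
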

\begin{proof}
The proof follows by carrying out the same procedure as in the previous sections.
\end{proof}
This proposition gives a closed-form expression of the amplitude of the bifurcating periodic solution. Indeed, when $\tau<\tau_{cr}$, the mixed equilibrium $s^*$ is asymptotically stable, whereas, for the values of $\tau$ near and superior to $\tau_{cr}$, a stable periodic oscillation appears with an amplitude proportional to $\sqrt{\tau-\tau_{cr}}$.
\section{Numerical simulations}\label{sec_num}
\begin{figure}[t]
\hskip -0.54cm \includegraphics[width=4.8cm]{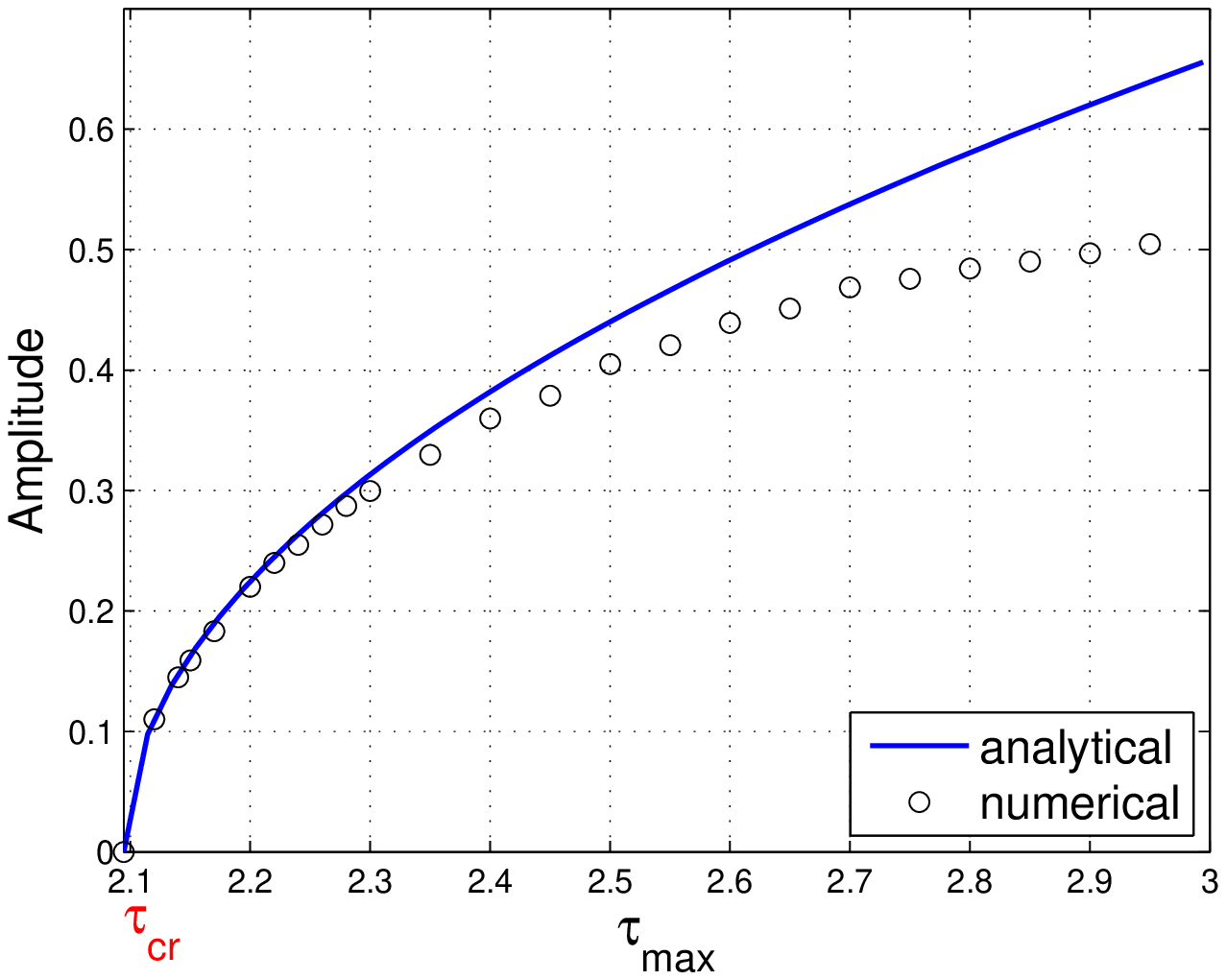}\includegraphics[width=4.8cm]{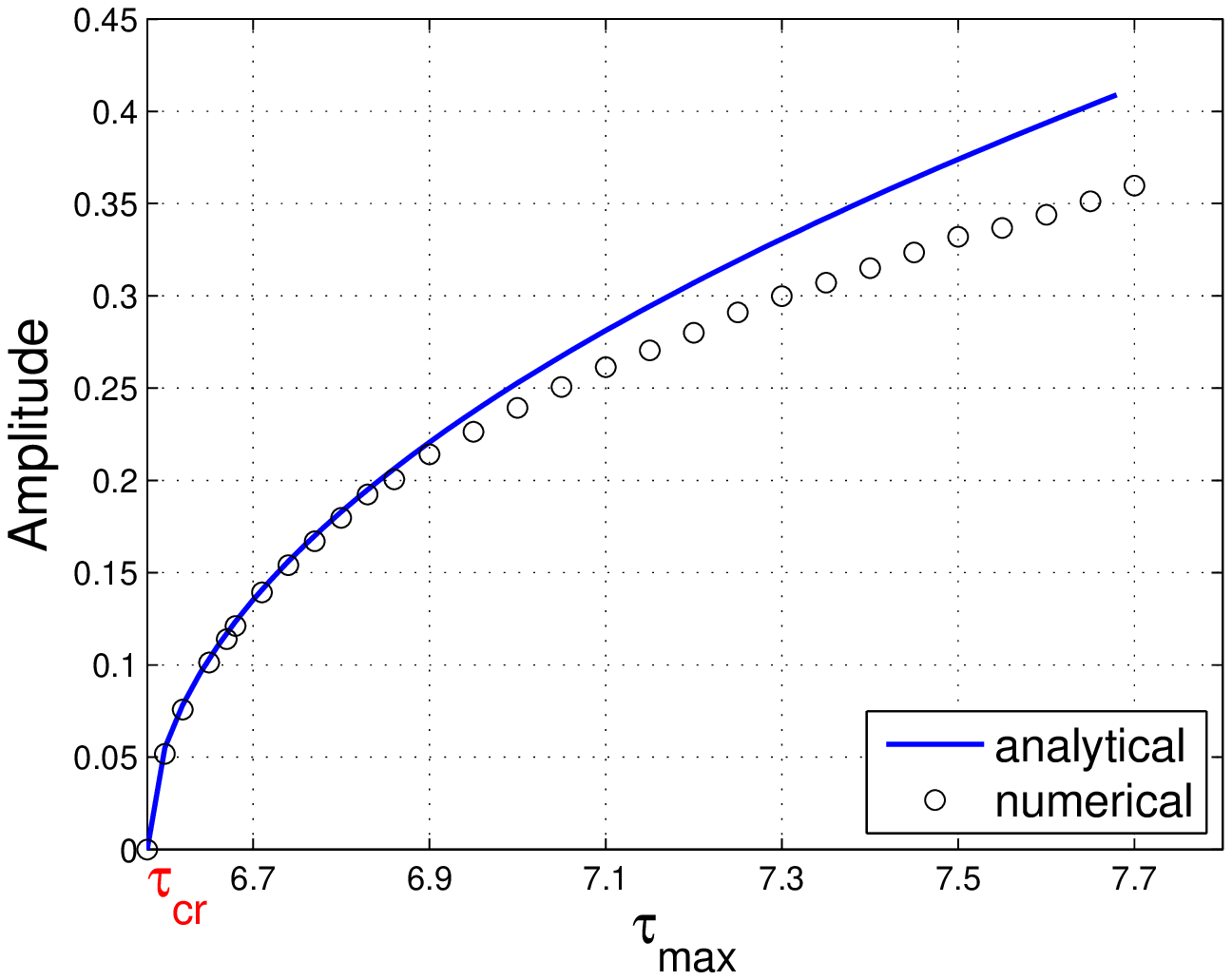}
\mbox{\hskip -0.54cm}  \includegraphics[width=4.8cm]{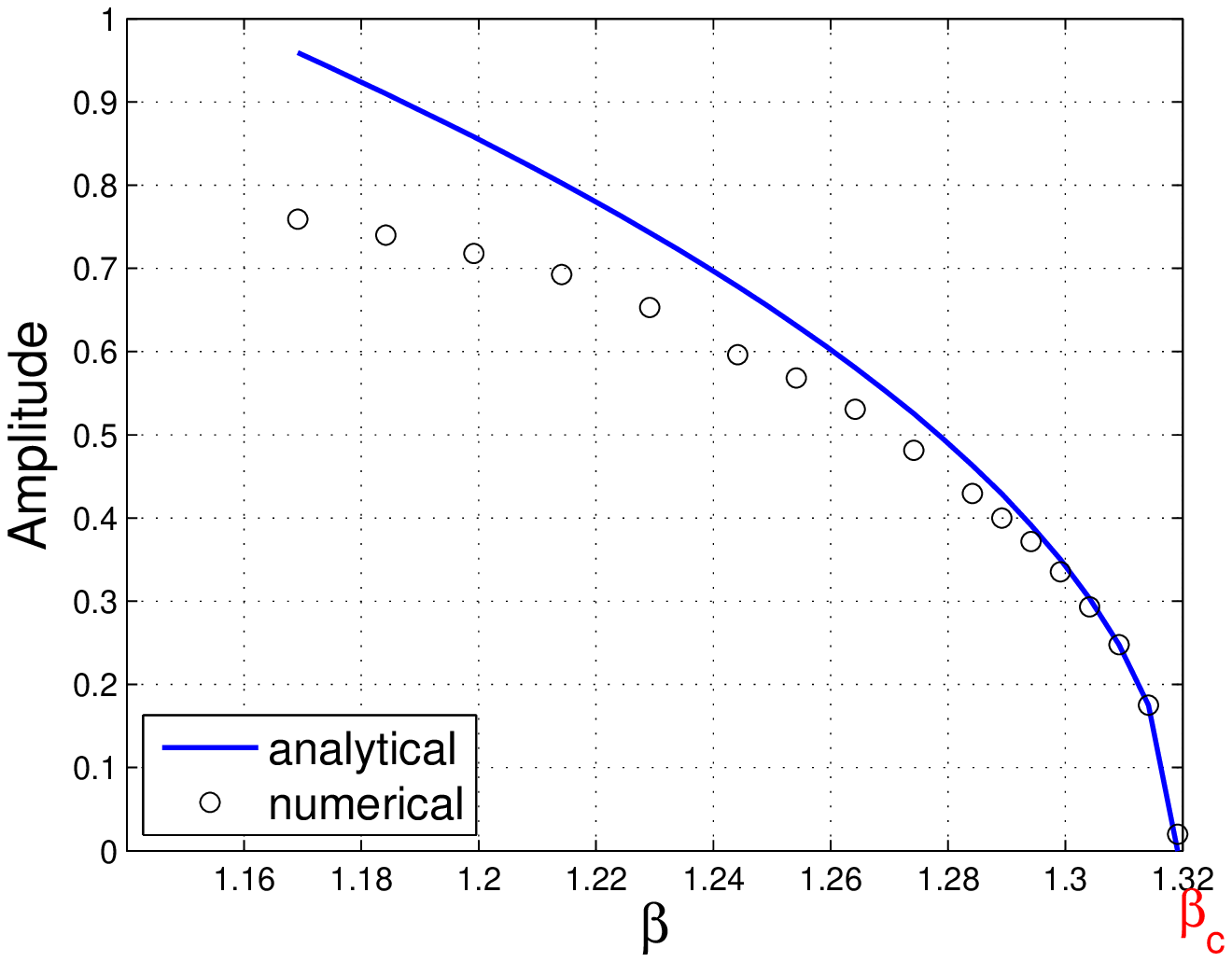}\includegraphics[width=4.8cm]{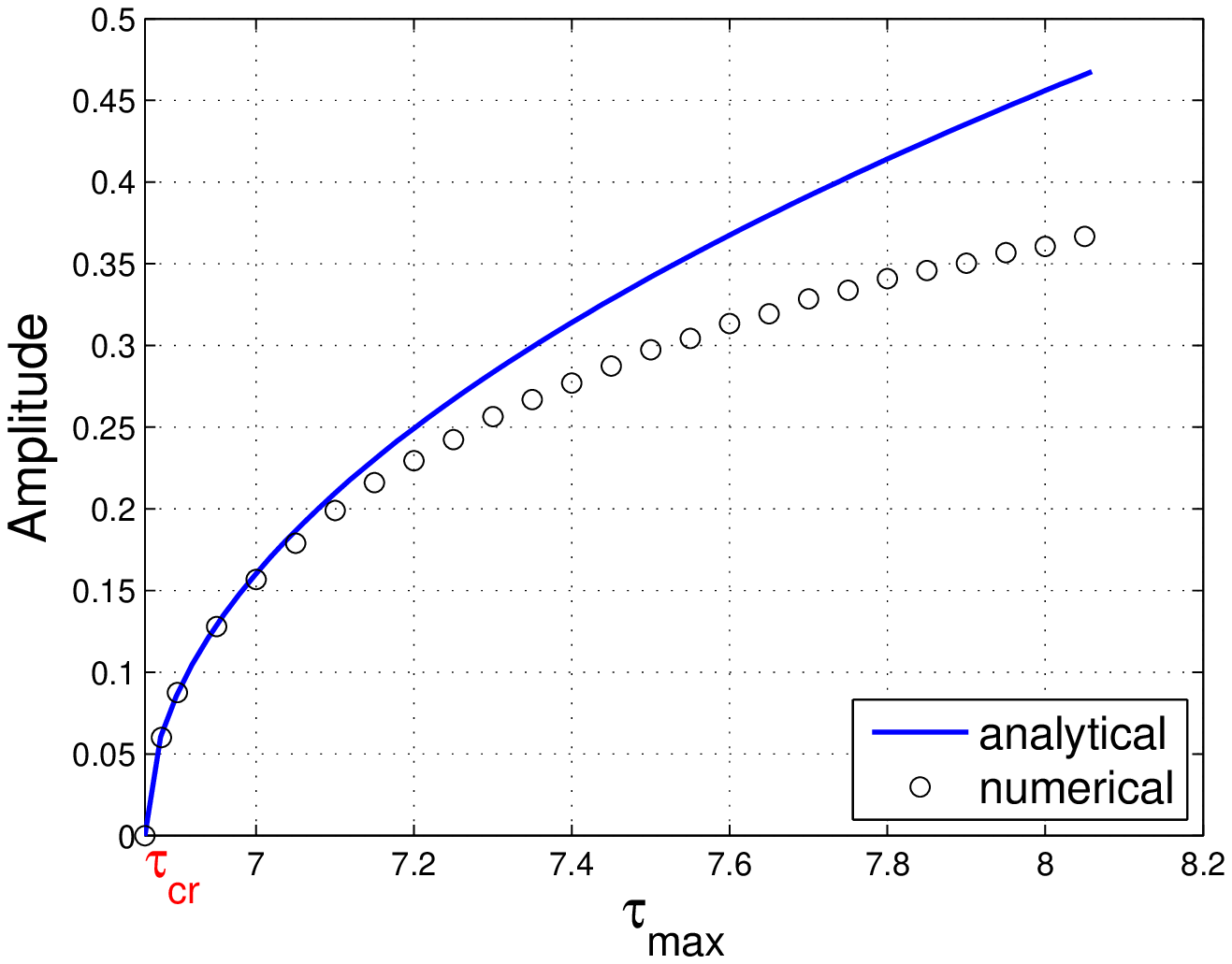}
\caption{The amplitude of the bifurcating periodic solution near the Hopf bifurcation, where $a=-1.5$, $b=3$, $c=0$, and $d=1.5$. {\it{Top-left}}, Dirac distribution. {\it{Top-right}}, Uniform distribution. {\it{Bottom-left}}, Gamma distribution with $k=3$. {\it{Bottom-right}}, Discrete distribution with $p=0.6$.}\label{fig_amplitude1}
\end{figure}
In this section, we propose to compare the properties of the bifurcating periodic solution obtained by  the perturbation method with numerical results. We depict in Fig. \ref{fig_amplitude1} the amplitude of the bifurcating limit cycle given in propositions \ref{prop_dirac}-\ref{prop_di_2} and the amplitude obtained numerically (in circles), for different delay distributions. We observe that the results of the two methods coincide for the values of $\tau$ (or $\beta$) close to $\tau_{cr}$ (or $\beta_c$). For example, 
in the case of discrete delays, the critical delay is given by $\tau_{cr}=6.86$ time units, and for the values of delays close to $\tau_{cr}$ the amplitude predicted analytically and the one obtained numerically coincide but the difference between them increases gradually until reaching $0.1$ when $\tau_{max}$  equals $8.05$ time units. For the Gamma distribution, $\beta_c=1.32$ and the critical mean delay is given by $2.28$ time units. The bifurcation occurs for the values of $\beta$ near and below $\beta_c$ (recall that the mean is $\frac{k}{\beta}$), which explains the shape of the amplitude growth.

\section{CONCLUSIONS}\label{sec_conc}
In this paper, we considered the two-strategy replicator dynamics subject to uncertain delays. Taking as a bifurcation parameter the mean delay, we proved that the asymptotic stability of the mixed equilibrium may be lost at the Hopf bifurcation, in which case the replicator dynamics exhibits a stable periodic oscillation (limit cycle) in the proportion of strategies in the population. As the mean delay moves away from the critical value, the amplitude of the limit cycle grows gradually. Using a nonlinear Lindstedt's perturbation method and considering different probability distributions of delays, we approximated the bifurcating limit cycle and we determined analytically the growth rate of the radius of the limit cycle. Furthermore, we compared with numerical simulations. As an extension to this work, we plan to investigate the center manifold approach.

\addtolength{\textheight}{-12cm}   



\bibliographystyle{IEEEtran}
\bibliography{ref} 
\section*{APPENDIX}
\subsection{Proof of Proposition \ref{prop_dirac}}
Since we take as a bifurcation parameter $\tau$, we make the following series expansion:
\begin{eqnarray*}
\tau&=&\tau_{cr}+\epsilon^2 \hat{\mu} + {{O}}(\epsilon^3),\nonumber\\
&=&\tau_{cr}+\mu +{{O}}(\epsilon^3).
\end{eqnarray*}
Furthermore, we expand $u(T-\Omega \tau)$ as follows :
\begin{eqnarray*}
u(T-\Omega \tau)&=&u(T-(w_0 + \epsilon^2 k_2+..)(\tau_{cr}+\epsilon^2 \hat{\mu})) \nonumber \\
&=&u(T-w_0 \tau_{cr}- \epsilon^2(k_2 \tau_{cr}+w_0 \hat{\mu}) +...)\nonumber \\
&=&u(T-w_0 \tau_{cr})-\epsilon^2(k_2 \tau_{cr}+w_0 \hat{\mu}) \times \nonumber \\ &&u'(T-w_0\tau_{cr})+{{O}}(\epsilon^3).
\end{eqnarray*}
In addition, we have:
\begin{eqnarray*}
\Omega=w_0+ \epsilon^2k_2+{{O}}(\epsilon^3).
\end{eqnarray*}
Substituting the above series expansions and collecting terms of similar order in $\epsilon$ in equation (\ref{repl_dirac1}), we get:
\begin{eqnarray}
&& \hskip -0.912cm \bullet \;\,  w_0 \frac{du_0(T)}{dT}-A u_0(T-w_0 \tau_{cr})=0,\label{eqrrr}\\
&& \hskip -0.912cm \bullet \;\,  w_0 \frac{du_1(T)}{dT}-A u_1(T-w_0\tau_{cr})=B u_0(T)u_0(T-w_0\tau_{cr}),\;\;\;\;\; \label{equ_11}\\
&& \hskip -0.912cm \bullet \;\,  w_0 \frac{du_2(T)}{dT}-A  u_2(T-w_0\tau_{cr})=-k_2 \frac{du_0(T)}{dT}-A(k_2 \tau_{cr}+ \nonumber \\ && \hskip -0.4cm w_0 \hat{\mu}) u_0'(T-w_0\tau_{cr})+B u_1(T) u_0(T-w_0\tau_{cr})+\nonumber \\ &&\hskip -0.4cm Bu_0(T) u_1(T-w_0\tau_{cr})+C u_0(T-w_0\tau_{cr}) u_0^2(T) \label{eq_sec1}.
\end{eqnarray}
A solution of (\ref{eqrrr}) is of the form,
\begin{eqnarray}\label{equ_0}
u_0(T)={\hat{A}}_m {\mbox{cos}}(T).
\end{eqnarray}
 We substitute (\ref{equ_0}) into  (\ref{equ_11}) to get the following equation in $u_1$:
\begin{eqnarray*}
w_0 \frac{du_1(T)}{dT}-A u_1(T-w_0\tau_{cr})=\frac{B {\hat{A}}_m^2}{2} {\mbox{sin}}(2T).
\end{eqnarray*}
Let $u_1(T)=m_1 {\mbox{sin}}(2T)+m_2 {\mbox{cos}}(2T)$ be a solution of the above DDE. Solving this equation for $u_1$ yields:
\begin{eqnarray*}
m_1={\hat{A}}_m^2 \frac{B}{10 A}, \quad \mbox{and} \quad  m_2=2m_1.
\end{eqnarray*}
We used the relation $w_0=\delta \gamma=-A$. 
Finally, after using some trigonometric formulae and setting the secular terms ($\mbox{cos}(T)$ and $\mbox{sin}(T)$ terms that yield a resonance effect) to zero in equation (\ref{eq_sec1}), we get the amplitude of the bifurcating periodic solution:
\begin{eqnarray*}
{\hat{A}}_m^2=\frac{-20A^3 \hat{\mu}}{5CA^2\tau_{cr}-3B^2A \tau_{cr}-B^2}.
\end{eqnarray*}
Multiplying both sides by $\epsilon^2$ in the equation above, we get:
\begin{eqnarray*}
{{A}}_m^2=\frac{-20A^3 {\mu}}{5CA^2\tau_{cr}-3B^2A \tau_{cr}-B^2}.
\end{eqnarray*}
Since the equilibrium is asymptotically stable when $\tau<\tau_{cr}$ and $\frac{d\lambda (\tau)}{d\tau}_{|\tau=\tau_{cr}}>0$, the limit cycle is stable and the bifurcation is supercritical.
\subsection{Proof of Proposition \ref{prop_unif}}
\begin{proof}
Since we take as a bifurcation parameter $\tau_{max}$, we can make the following series expansion:
\begin{eqnarray*}
\tau_{max}&=&\tau_{cr}+\epsilon^2 \hat{\mu} +{{O}}(\epsilon^3), \nonumber \\
&=&\tau_{cr}+\mu +{{O}}(\epsilon^3).
\end{eqnarray*}
We recall that,
\begin{eqnarray*}
u(T-\Omega \tau)&=&u(T- (w_0 +k_2 \epsilon^2+..) \tau),\nonumber \\
&=&u(T-w_0 \tau)-k_2 \epsilon ^2 \tau u'(T-w_0 \tau) +{{O}}(\epsilon^3).
\end{eqnarray*}
Let $\bar{u}=u(T-\Omega \tau)$. We have,
\begin{eqnarray*}
&&\hskip -0.9cm \int_{0}^{\tau_{max}}{\bar{u}d\tau}=\int_{0}^{\tau_{cr}}{\bar{u}d\tau}+\int_{\tau_{cr}}^{\tau_{cr}+\epsilon^2 \hat{\mu} }{\bar{u}d\tau},\nonumber\\
&&=\int_{0}^{\tau_{cr}}{u_0(T-w_0 \tau)d\tau}+\epsilon \int_{0}^{\tau_{cr}}{u_1(T-w_0 \tau)d\tau}+\nonumber \\ && \epsilon^2 \Big(\int_{0}^{\tau_{cr}}{\big(u_2(T-w_0 \tau)-k_2 \tau u_0'(T-w_0 \tau) \big)d\tau}\nonumber \\ && +\hat{\mu} u_0(T-w_0 \tau)\Big)+{{O}}(\epsilon^3).
\end{eqnarray*}
Taking account of the above series expansions, we collect terms of the same order in $\epsilon$ in (\ref{repl_unif}), we get the following system of equations, which we resolve recursively:
\begin{eqnarray}
&& \hskip -0.8cm  \bullet \;\; w_0 \tau_{cr} \frac{du_0(T)}{dT}-A \int_{0}^{\tau_{cr}}{u_0(T-w_0\tau) d\tau}=0, \label{eq_uuuu}\\
&& \hskip -0.8cm  \bullet \;\; w_0 \tau_{cr} \frac{du_1(T)}{dT}-A \int_{0}^{\tau_{cr}}{u_1(T-w_0\tau)d\tau}=B u_0(T)  \times \nonumber \\ && \int_{0}^{\tau_{cr}}{u_0(T-w_0\tau)d\tau}, \label{eq_99} \\
&& \hskip -0.8cm  \bullet \;\;w_0\tau_{cr} \frac{du_2(T)}{dT}-A\int_{0}^{\tau_{cr}}{u_2(T-w_0\tau)d\tau}=-(k_2 \tau_{cr}+\nonumber \\ && \hskip -0.2cm w_0 \hat{\mu})\frac{du_0(T)}{dT} + (A\hat{\mu}+ \frac{Ak_2 \tau_{cr}}{w_0}) u_0(T-w_0 \tau_{cr})+ \nonumber \\&& \hskip -0.7 cm (-\frac{A k_2}{w_0} +Bu_1(T)+ C u_0^2(T) ) \int_{0}^{\tau_{cr}}{u_0(T-w_0\tau)d\tau} \label{eq_1012}.
\end{eqnarray}
A solution of (\ref{eq_uuuu}) is of the form:
\begin{eqnarray*}
u_0(T)=\hat{A}_m \mbox{cos}(T).
\end{eqnarray*}
Let $u_1(T)=m_1{\mbox{sin}}(2T)+m_2{\mbox{cos}}(2T)$ be a solution of (\ref{eq_99}). Then, we have, 
\begin{eqnarray*}
\int_{0}^{\tau_{cr}}{u_1(T-w_0\tau)d\tau}=0.
\end{eqnarray*}
The result above can be obtained by remembering that $\mbox{cos}(w_0\tau_{cr})=-1$ and $\mbox{sin}(w_0\tau_{cr})=0$.
On the other hand, we have,
\begin{eqnarray*}
\int_{0}^{\tau_{cr}}{u_0(T-w_0\tau)d\tau}&=&2\frac{\hat{A}_m}{w_0}{\mbox{sin}}(T).
\end{eqnarray*}
Therefore, equation (\ref{eq_99}) can be reduced to:
\begin{eqnarray*}
w_0 \tau_{cr} \frac{du_1(T)}{dT}=\frac{\hat{A}_m^2 B}{w_0} {\mbox{sin}}(2T)
\end{eqnarray*}
Considering that $u_1(T)=m_1{\mbox{sin}}(2T)+m_2{\mbox{cos}}(2T)$, we get:
\begin{eqnarray*}
m_1=0, \quad \mbox{and} \quad  m_2=-\frac{B \hat{A}_m^2}{2w_0^2\tau_{cr}}.
\end{eqnarray*}
On the other hand, equation (\ref{eq_1012}) can be written as:
\begin{eqnarray*}
w_0\tau_{cr} \frac{du_2(T)}{dT}-A \int_{0}^{\tau_{cr}}{u_2(T-w_0\tau)d\tau}=\hat{A}_m{{F}}_1{\mbox{sin}}(T) \nonumber \\ + \hat{A}_m {{F}}_2 {\mbox{cos}}(T)+ {{F}}_3 \mbox{sin}(3T)+{{F}}_4 \mbox{cos}(3T).
\end{eqnarray*}
where,
\begin{eqnarray*}
{{F}}_1= (k_2\tau_{cr}+\hat{\mu} w_0)-2\frac{Ak_2}{w_0^2}-\frac{B m_2}{w_0}+\frac{C \hat{A}_m^2}{2w_0},
\end{eqnarray*}
and,
\begin{eqnarray*}
{{F}}_2=-A\hat{\mu}-A \frac{k_2\tau_{cr}}{w_0}.
\end{eqnarray*}
By removing the secular terms that yield a resonant effect, that is by setting ${{F}}_1=0$ and ${{F}}_2=0$, we get:
\begin{eqnarray*}
\hat{A}_m^2=\frac{8A^2\hat{\mu}}{\tau_{cr}(B^2-2CA)}.
\end{eqnarray*}
Multiplying both sides by $\epsilon^2$ in the equation above, we get:
\begin{eqnarray*}
{A}_m^2=\frac{8A^2{\mu}}{\tau_{cr}(B^2-2CA)}.
\end{eqnarray*}
Since the equilibrium is asymptotically stable when $\tau_{max}<\tau_{cr}$ and $\frac{d\lambda(\tau_{max})}{d\tau_{max}}_{|\tau_{max}=\tau_{cr}}>0$, the limit cycle is stable and the Hopf bifurcation is supercritical.
\end{proof}
\subsection*{Proof of Proposition \ref{prop_gamma}}
Since we take as a bifurcation parameter $\beta$, we can make a series expansion of $\beta$ as follows:
\begin{eqnarray*}
\beta=\beta_c + \epsilon^2 \hat{\mu}+ {{O}}(\epsilon^3).
\end{eqnarray*}
By remembering that $(\beta_c +\epsilon^2 \hat{\mu})^k=\beta_c^k+k\beta_c^{k-1}\epsilon^2 \hat{\mu}+{{O}}(\epsilon^3)$, and by making a series expansion in (\ref{eq_0001}), we get,
\begin{eqnarray*}
{{I}}&=&\int_{0}^{\infty}{\frac{{(\beta_c +\epsilon^2 \mu)}^k}{\Gamma(k)} \tau^{k-1}e^{-(\beta_c+\epsilon^2 \mu) \tau}u(T-\Omega \tau) d\tau}\nonumber \\
&=&\int_{0}^{\infty}{\frac{{(\beta_c +\epsilon^2 \mu)}^k}{\Gamma(k)} \tau^{k-1}e^{-\beta_c \tau}(1-\epsilon^2 \mu \tau)u(T-\Omega \tau) d\tau}\nonumber \\
&=&\frac{\beta_c^{k}}{\Gamma(k)}\int_{0}^{\infty}{\tau^{k-1}e^{-\beta_c \tau}u_0(T-w_0 \tau)d\tau}+\epsilon\frac{\beta_c^{k}}{\Gamma(k)}\times \\ &&\int_{0}^{\infty}{\tau^{k-1}e^{-\beta_c \tau}  u_1(T-w_0 \tau) d\tau}+\frac{\epsilon^2 \beta_c^{k}}{\Gamma(k)} \times \nonumber \\ &&\hskip -0.9cm {\scriptsize {\int_{0}^{\infty}{\tau^{k-1}e^{-\beta_c \tau}\Big(u_2(T-w_0\tau)-k_2\tau u_0'(T-w_0\tau)}}}-\\&&{{{\hat{\mu}  \tau u_0(T-w_0\tau)\Big)d\tau}}}+\frac{\epsilon^2k\beta_c^{k-1} \hat{\mu}}{\Gamma(k)} \times \nonumber \\ &&  \int_{0}^{\infty}{ \tau^{k-1}e^{-\beta_c \tau} u_0(T-w_0 \tau) d\tau}+{{O}}(\epsilon^3).
\end{eqnarray*}
Then, taking account of the previous expansions, and collecting terms of similar order in $\epsilon$, we get the following equations which we resolve recursively:
\begin{eqnarray}
&&\hskip -0.8 cm \bullet \;\; w_0 \frac{du_0(T)}{dT}-A \frac{\beta_c^k}{\Gamma(k)} \int_{0}^{\infty}{\tau^{k-1} e^{-\beta_c \tau}u_0(T-w_0\tau)d\tau}=0,\nonumber\\
&&\hskip -0.8 cm \bullet \;\; w_0 \frac{du_1(T)}{dT}-A \frac{\beta_c^k}{\Gamma(k)} \int_{0}^{\infty}{\tau^{k-1} e^{-\beta_c \tau}u_1(T-w_0\tau)d\tau}=\nonumber \\ && Bu_0(T) \frac{\beta_c^k}{\Gamma(k)} \int_{0}^{\infty}{\tau^{k-1} e^{-\beta_c \tau}u_0(T-w_0\tau)d\tau}, \nonumber \\ \label{eq_rr_t}\\
&&\hskip -0.8 cm\bullet \; \;  w_0 \frac{du_2(T)}{dT}-A\frac{\beta_c^k}{\Gamma(k)} \int_{0}^{\infty}{\tau^{k-1} e^{-\beta_c \tau}u_2(T-w_0\tau)d\tau}=\nonumber \\ &&-k_2\frac{du_0(T)}{dT}-Ak_2\frac{\beta_c^k}{\Gamma(k)}\int_{0}^{\infty}{\tau^k e^{-\beta_c\tau} u_0'(T-w_0\tau)d\tau}\nonumber \\
&&-\frac{A\beta_c^k \mu}{\Gamma(k)}\int_{0}^{\infty}{\tau^k e^{-\beta_c\tau} u_0(T-w_0\tau)d\tau}+\frac{B \beta_c^k}{\Gamma(k)} u_0(T) \times \nonumber \\ && \hskip -0.7cm \int_{0}^{\infty}{\tau^{k-1} e^{-\beta_c \tau}u_1(T-w_0\tau)d\tau}+ \Big( \frac{B \beta_c^k}{\Gamma(k)} u_1(T) + \frac{C\beta_c^k}{\Gamma(k)}u_0^2(t)\nonumber \\ && +\frac{kA}{\Gamma(k)} \beta_c^{k-1} \hat{\mu} \Big )\int_{0}^{\infty}{\tau^{k-1} e^{-\beta_c \tau}u_0(T-w_0\tau)d\tau}.  \label{eq_f}
\end{eqnarray}
Let $u_1=m_1\mbox{sin}(2T)+m_2\mbox{cos}(2T)$ be a solution of (\ref{eq_rr_t}). 
Solving (\ref{eq_rr_t}) in $u_1$, we get:
\begin{eqnarray*}
m_1= {{F}}_1  \hat{A}_m^2, \; \mbox{and} \quad m_2={{F}}_2 \hat{A}_m^2,
\end{eqnarray*}
where,
\begin{eqnarray*}\label{f_1}
\hskip -0.3cm{{F}}_1=-\frac{\frac{AB}{2} (1+\frac{w_0^2}{\beta_c^2})^{-\frac{k}{2}} (1+4\frac{w_0^2}{\beta_c^2})^{-\frac{k}{2}} \mbox{cos}(k\theta_1)}{4w_0^2+A^2(1+4\frac{w_0^2}{\beta_c^2})^{-k}+4w_0A (1+4\frac{w_0^2}{\beta_c^2})^{-\frac{k}{2}}\mbox{sin}(k\theta_1)},
\end{eqnarray*}
and,
\begin{eqnarray*}\label{f_2}
\hskip -0.3cm {{F}}_2=-\frac{\frac{B}{2} (1+\frac{w_0^2}{\beta_c^2})^{-\frac{k}{2}}(2w_0+A (1+4\frac{w_0^2}{\beta_c^2})^{-\frac{k}{2}} \mbox{sin}(k\theta_1)) }{4w_0^2+A^2(1+4\frac{w_0^2}{\beta_c^2})^{-k}+4w_0A (1+4\frac{w_0^2}{\beta_c^2})^{-\frac{k}{2}}\mbox{sin}(k\theta_1)},
\end{eqnarray*}
$\theta_1=\mbox{atan}(\frac{2w_0}{\beta_c})$, and '\mbox{atan}' denotes the  $0$ to $\frac{\pi}{2}$ branch of the inverse tangente function.
On the other hand, equation (\ref{eq_f}) can be written as:
\begin{eqnarray*}
w_0 \frac{du_2(T)}{dT}-A\frac{\beta_c^k}{\Gamma(k)} \int_{0}^{\infty}{\tau^{k-1} e^{-\beta_c \tau}u_2(T-w_0\tau)d\tau}= \\G\mbox{sin}(T)+K\mbox{cos}(T)+L \mbox{cos}(3T)+M\mbox{sin}(3T).
\end{eqnarray*}
where,
\begin{eqnarray*}
{{G}}=k_2\hat{A}_m-A\hat{A}_m\beta_c^k(k+1)(\beta_c^2+w_0^2)^{-\frac{k+2}{2}} (k_2w_0+\beta_c\hat{\mu}) \\ +\frac{C\beta_c^k \hat{A}_m^3}{4}(\beta_c^2+w_0^2)^{-\frac{k}{2}} 
+\hat{A}_m\beta_c^{k-1} (\beta_c^2+w_0^2)^{-\frac{k+1}{2}} (-\frac{m_2}{2} \times \\ B \beta_c w_0+ \frac{m_1}{2}B \beta_c^2+kA\hat{\mu}w_0),
\end{eqnarray*}
and
\begin{eqnarray*}
{{K}}=\beta_c^k (\beta_c^2+w_0^2)^{-\frac{k+1}{2}} \hat{A}_m (-k_2\beta_cA(k+1)(\beta_c^2+w_0^2)^{-\frac{1}{2}} +\\Aw_0 \hat{\mu} (k+1) (\beta_c^2+w_0^2)^{-\frac{1}{2}}+\frac{m_2}{2}B\beta_c+\frac{m_1}{2}w_0B+kA\hat{\mu}).
\end{eqnarray*}
Finally, by setting $K=0$ and $G=0$, we obtain the amplitude given in equation (\ref{amp_gamma}). Since the equilibrium is asymptotically stable when $\beta>\beta_c$ and $\frac{d\lambda(\beta)}{d\beta}_{|\beta=\beta_c}<0$, the bifurcating limit cycle is stable and the bifurcation is supercritical.

\end{document}